\newcommand{\abs}[1]{\ensuremath{|#1|}}
\newcommand{\bigO}[1]{\ensuremath{O\left( #1 \right)}\xspace}
\newcommand{\degree}{\ensuremath{^\circ}}
\begin{document}

\title{Optimal online escape path against a certificate\thanks{Partially supported by the National Science Foundation, NSF grant CCF 1017539; see also \cite{k-pc-15}.}~\thanks{A preliminary version of this paper has been presented at SWAT.}}
\author{Elmar Langetepe\inst{1} \and David K\"ubel\inst{1}}
\institute{University of Bonn, Institute of Computer Science I, 53117 Bonn, Germany
\email{elmar.langetepe@cs.uni-bonn.de, kuebel@cs.uni-bonn.de}}
\titlerunning{Optimal online escape path}
\authorrunning{E. Langetepe and D. K\"ubel}

\maketitle

\begin{abstract}
More than fifty years ago, Bellman asked for the best escape path within  
a known forest but for an unknown starting position. This deterministic finite 
path is the shortest path that leads out of a given environment from any starting 
point. There are some worst case positions where the full path length is required. 
Up to now such a fixed  \emph{ultimate optimal escape path} for a known shape 
for any starting position is only known for some 
special convex shapes (i.e., circles, strips of a given width, 
fat convex bodies, some isosceles triangles).

Therefore, we introduce a different, simple and intuitive 
escape path, the so-called \emph{certificate path}.
This escape path depends on the starting position~$s$ and takes the distances from~$s$ to the 
outer boundary of the environment into account. Due to the additional information, 
the certificate path \emph{always} (for any position $s$) leaves the environment earlier than the ultimate escape path, in the above convex examples.

Next we assume that fewer information is available.
Neither the precise shape of the environment, nor the location of the starting point is known. 
For a class of environments (convex shapes and shapes with kernel positions), we design an \emph{online} strategy that always leaves the environment. 
 We show that the path length for leaving the environment is 
 always shorter than $3.318764$ the length of the corresponding certificate path.
We also give a lower bound of $3.313126$, which
shows that for the above class of environments the factor $3.318764$ is (almost) tight.
\end{abstract}

\section{Introduction\label{intro-sect}}
We consider the following motion planning task. 
Let us assume that we are given a simple polygon $P$ and a starting point $s$ inside $P$.  
We would like to design a simple path starting 
at  $s$ that finally hits the boundary and leaves the polygon. In the sense of a game, we can 
choose a path but then an adversary can rotate the polygon $P$ around  $s$ so that the path 
will leave the polygon very late. 

First, we assume that the distance from $s$ to the boundary is given into every direction.
We can apply a simple and intuitive strategy.
The \emph{certificate path} is the best combination of a line segment $l$ and 
an arc of length $l\alpha$ along the circle of radius $l$ around the starting point. 
So this path simply checks an angular portion of the environment for a distance~$l$. 
For a given starting point the certificate path is the best (shortest) such path that guarantees to hit the boundary. 
 Altogether the certificate path is a very simple \emph{escape path} for given $s$ and $P$  
 (if an adversary can only rotate $P$ around $s$).

In turn, for any given unknown starting position $s$ inside an unknown polygon, 
we would like to design an \emph{online strategy} (based on fewer information) that is never much worse than 
the length of the above certificate path. In this paper, we show that for a class of environments, 
there is a spiral strategy that leaves any such polygon and approximates the length of the certificate 
path within a ratio of $3.318674$. We also prove that this is an (almost) tight bound. 
There is no other strategy that always attains a better ratio against the length of the certificate path. 

This optimal online approximation is restricted to the following class of environments. 
We assume that  in any direction  from the unknown starting point only one 
boundary point exists. The distance to the boundary points still remains unknown. 
This subsumes any unknown convex environment (for any unknown starting position) and also
unknown star-shaped environments (for any unknown starting point inside the kernel). 
The motivation of comparing an online escape path for special polygons 
(star-shaped) and special starting positions (inside the kernel) with 
a path that is computed with some additional but not complete information (certificate path) 
stems from the following observation. 

For a known polygonal shape and an unknown starting point, it is possible to 
define an \emph{ultimate optimal escape path}. This path will lead out of the environment 
for any starting point  and any rotation of the polygon. The ultimate optimal escape path is the 
shortest finite path with this property. 
The clue is that only the polygon is known but neither the starting position, nor 
the rotation around the starting position. 
The path is motivated by the situation of swimming in the fog in a 
pool of known shape. As it is foggy, the starting point and the rotation around the 
starting point is not known. 
Unfortunately, ultimate optimal escape paths have been found only for a 
few special convex shapes 
(circles, strips of given width, fat convex bodies, isosceles triangles, $\ldots$). 
It is unrealistic to think that such paths will be found for more complicated convex 
or star-shaped environments. 

Fortunately, for the few cases where an ultimate optimal escape path is known, the certificate path is not only a good approximation.
We can even show that the certificate path \emph{beats} 
the ultimate escape paths for \emph{any} starting point in these examples. 
Therefore, we are convinced that the certificate path can serve as a substitute for the unknown ultimate optimal escape path in the restricted cases.

The paper is organized as follows. In the next section we present related work. 
The certificate path is introduced and defined in Section~\ref{cert-sect}.
Different justification for the measure is discussed in Section~\ref{just-sect}.
Finally, in Section~\ref{onlappe-sect}, we present and analyse a strategy
with path length not larger than  $3.318674$ times the length of 
the certificate path. 
The strategy is a logarithmic spiral attained by keeping aware of 
two extremes of the certificate. 
Optimizing the spiral for two extremes is also different from classical 
logarithmic spiral constructions where we normally optimize against 
a single distance (shortest path). 
In Section~\ref{opt-sect}, we present a general lower bound which proves that the given strategy is 
almost optimal for the restricted cases. No other strategy 
will have a better ratio than $3.313126$ against the length of the certificate 
path. 
Proving lower bounds is a tedious task, the construction and the analysis might 
be interesting in its own right.  

\section{Related work\label{related-sect}}
 The \emph{Swimming-in-the-fog} problem 
is a game where  two players, a searcher and a hider, 
compete with each other. The searcher tries to reach the boundary of a known shape
from its starting point along a single finite path, 
while  the hider can rotate and translate the environment  
so that the path of the searcher will cross the boundary as late as possible.
For a given shape, the shortest finite path that always leads 
out of the given environment can be denoted as an \emph{ultimate optimal escape path} 
as mentioned before. 

Since the first work by Koopman in 1946, search games have been studied in many variations in the last
60 years.
The book by Gal \cite{g-sg-80} and the reissue by Alpern and Gal
\cite{ag-tsgr-03} gives a comprehensive overview of such search
game problems, also for unknown environments. 

The above problem goes at least back to 1956 and to Bellman~\cite{b-mp-56}, who similarly 
asked 
for the shortest escape path within a known forest but for an unknown starting point. Since 
then, the problem has attracted a lot of attention. Unfortunately until today, 
the problem could be solved only  
for very special convex environments (circles, strips of given width,
rectangles, fat convex bodies, isosceles triangles); see for example the 
monograph of Finch and Wetzel~\cite{fw-lf-04}.

For circles  and fat convex bodies, it was shown that the diameter is the 
ultimate optimal escape path; see Finch and Wetzel~\cite{fw-lf-04}.
For the infinite strip of width $l$, the ultimate optimal escape path
is due to Zalgaller~\cite{z-hgw-61,z-qb-05}.
For the simple equilateral triangle of side length $1$, the zig-zag path of Besicovitch~\cite{b-accoe-65}
of length $\approx0.981981$ is optimal; see also \cite{cm-btcua-06}. 
Furthermore, in 1961 Gluss~\cite{g-mpsc-61} introduced the problem of searching 
for a circle $C$ of given radius $s$ and given distance $r$ away from the start~$A$. 
Two different cases can be considered, either $A$ is inside $C$ or not. 
Interestingly, in the latter case and for $s=1$ a certificate path with length $l=r$ and an arc 
of length $2\pi\cdot l$ is the best one can do. 

It is unrealistic to think that such ultimate optimal escape paths will be found for
more complicated environments. 
As an alternative, we introduced a simple and natural \emph{certificate path} for a known environment and a given starting point~$s$.
It is computed individually for any starting point and takes the distance 
distribution from~$s$ to the boundary into account.
Fortunately, for the cases where ultimate optimal escape paths are known, they are always outperformed by the certificate path for any possible starting point.
This means we can prove that the certificate path always leaves the environment earlier.

The use of alternative comparison measures has some tradition. 
For example for the problem of searching for a point in a polygon and 
competing against the shortest path, there is no competitive strategy. 
For this reason other comparison measures have been suggested in this case; 
see Fleischer et al.~\cite{fkklt-colao-08} or Koutsoupias et al.~\cite{kpy-sfg-96}.  
Additionally, comparing the online strategy to the shortest path to the boundary is often a very 
difficult task. For example, the spiral conjecture for searching for 
a single line or a single ray against the shortest path is still open. In this sense 
our result might be considered as an intermediate step.

For the design of the online escape path, we assume that neither the precise shape of the environment, nor the position of the starting point is known.
We make use of the competitive framework to show that the online strategy can compete with the certificate path, which is computed with more information.
That is, we compare the length of
the online escape path from a starting point to the  boundary to the length of the certificate path 
to the boundary computed for the known environment and starting point. 
The competitive framework was introduced by Sleator and Tarjan
\cite{st-aelup-85}, and used in many settings since then; see for example
the survey by Fiat and Woeginger \cite{fw-ola-98} or, for the field
of online robot motion planning, see the surveys
\cite{ikkl-ccnt-02,rksi-rnut-93}.

Our optimal online approximation is restricted to the following class of environments. 
We assume that  in any direction  from the unknown starting point, only one boundary point exists. 
The distance to the boundary points remains unknown. 
This subsumes any unknown convex environment (for any unknown starting position) and also
unknown star-shaped environments (for any unknown starting point inside the kernel). 
In this sense, the certificate is  also a  natural extension of the discrete performance measure Kirkpatrick~\cite{k-hd-09} 
mentioned in the discrete case of
 searching for the end of a set of $m$ given  lists of unknown length. In his setting 
 it is sufficient to reach the end of only one list. 
In our configuration, this means that we have exactly $2\pi$ directions of 
unknown distance and it is sufficient to reach the shoreline in a single point. 
The corresponding relationship is shown in Section~\ref{just-sect}.

We will see that our solution is a specific logarithmic spiral. 
In general, logarithmic spirals are natural candidates for optimal 
competitive search strategies, but in almost all cases the optimality remains a conjecture; 
see \cite{bcr-sp-93,efkklt-colsr-06,f-lsc-05,fz-ss-05,g-sg-80}. 
In \cite{l-oss-10} the optimality of spiral search was shown for 
searching a point in the plane with a radar. Many other conjectures are still  open. 
For example Finch and Zhu~\cite{fz-ss-05} considered the 
problem of searching for a line in the plane.
\emph{The relevant conjecture 
that the family of logarithmic spirals contains the minimal path remains open}.

 \section{The certificate path}\label{cert-sect}
%
\begin{figure}
\begin{center}
\includegraphics[scale=0.28]{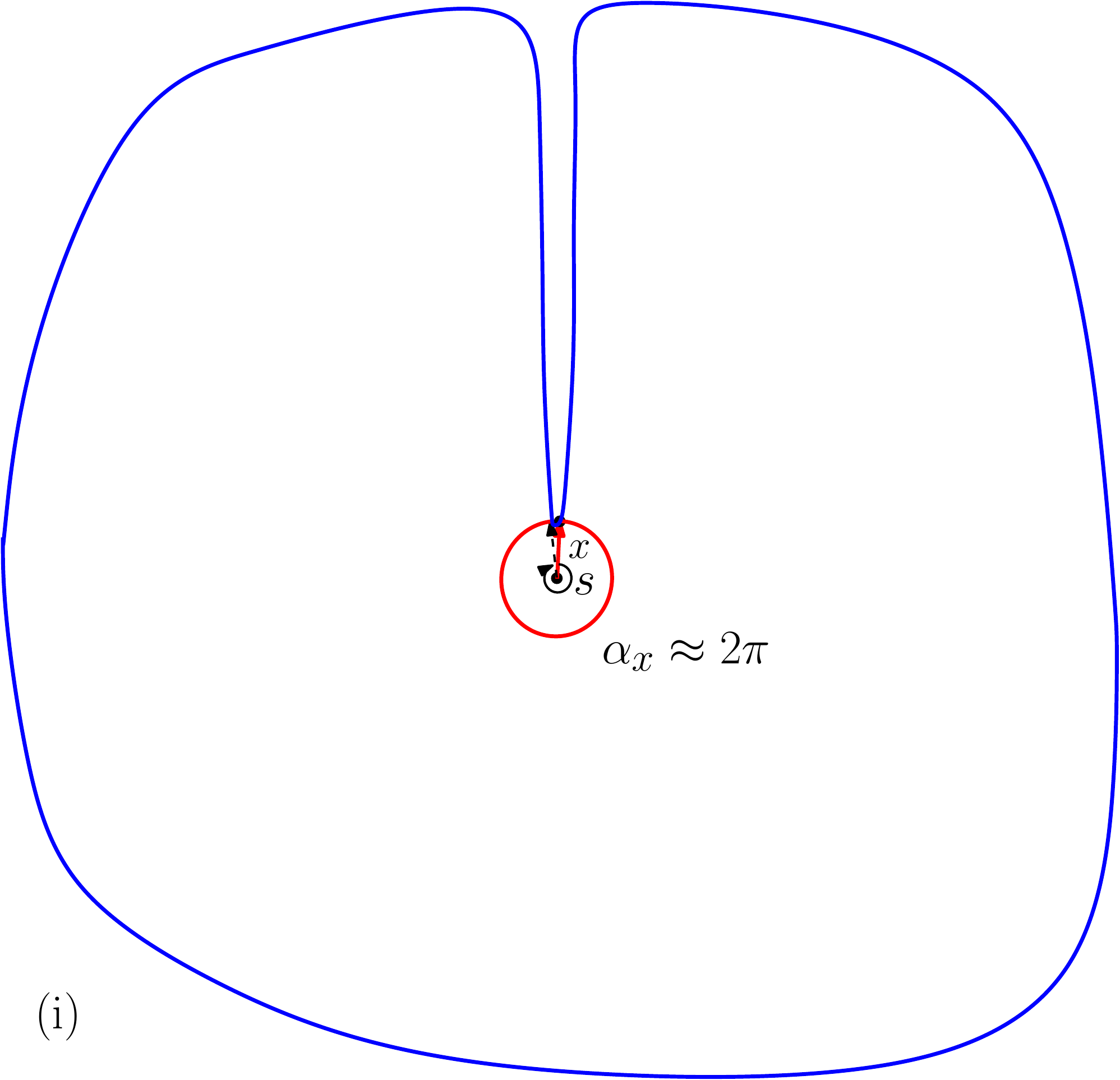}\quad\quad\quad
\includegraphics[scale=0.32]{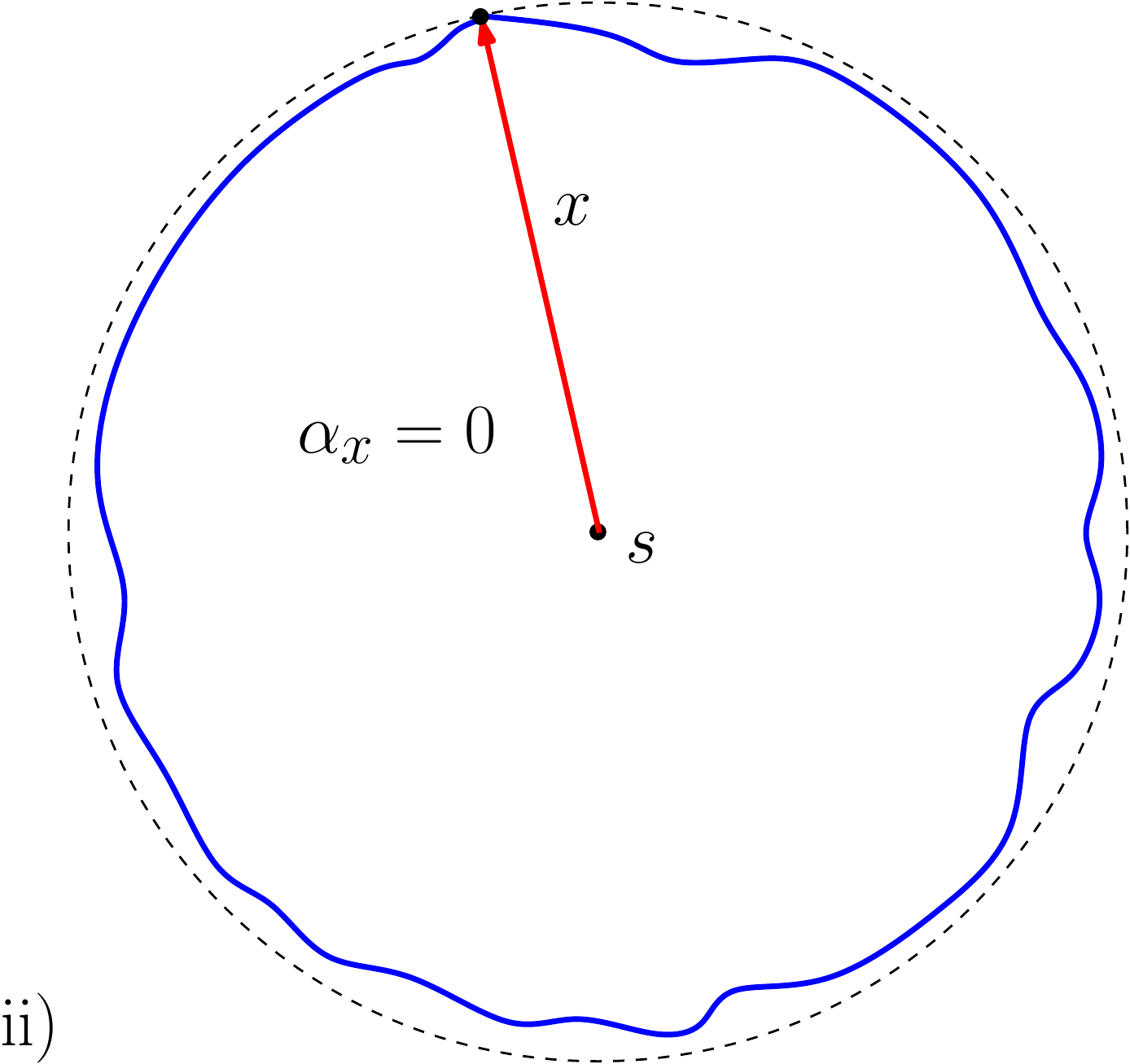}
\caption{
Two extreme situation for reaching the boundary with a circular arc.
(i) The radial maximal distance from $s$ to the boundary is almost the same in any direction.
It suffices to move in an arbitrary direction of maximal distance, which is optimal. 
(ii) The distance to some few boundary points is very small, while large to most of the others.
A reasonable path checks the small distance with a circular arc of length approximately $2\pi$.  
In both cases, $x(1+\alpha_x)$ is minimal among all such circular strategies.
}
\label{ExtremeCirc-fig}
\end{center}
\end{figure}
Assume that you are located in an unknown environment and would like to reach its boundary. 
Formally, for the environment we consider a closed Jordan curve $B$ 
that subdivides the Euclidean plane into exactly two regions.
The starting point $s$ lies inside the inner region, say $P$.
The task is to reach a point on the boundary $B$, as soon as possible. 

If you have some idea about the distance $x$ from $s$ to the boundary $B$ but nothing more, 
it is very intuitive to move along the circle of radius $x$ around the starting point. 
Therefore, a reasonable strategy moves toward this circle along a shortest path (by radius $x$) 
in some direction and then follows the circle in either clockwise or counterclockwise direction 
until the boundary is met.
Let us denote this behaviour a \emph{circular strategy}.
If we hit the boundary after moving an arc $\alpha_x$ along the circle, the overall path 
length is given by $x(1+\alpha_x)$.  

We would like to use such a circular strategy of small path length. 
In the sense of a game, the adversary can only rotate the environment around the 
starting point and the certificate path guarantees to hit the boundary for any rotation. 

\subsection{Extreme cases and general definition}\label{extremedef-sect}
Let us first consider two somehow extreme examples of the above intuitive idea as 
given in Figure~\ref{ExtremeCirc-fig}.
If the distance from $s$ to the boundary is almost the same in any direction 
(similar to a circle),  a line segment with maximal distance to the boundary (roughly the radius of the circle) 
will always hit the boundary and is indeed a very good escape path for any direction; 
see Figure~\ref{ExtremeCirc-fig}(ii). 
The movement along an arc is not necessary in this case.
In other words, $\alpha_x$ equals $0$. We check a single direction for the largest distance. 

On the other hand, it might be the case that the distance to the boundary is very large w.r.t. almost
all, but small (distance $x$) for some few directions from $s$.
Then, a segment of length $x$ and a circular arc of length $x\alpha_x$ with $\alpha_x\approx 2\pi$
will hit the boundary for any starting direction of the segment $x$; see Figure \ref{ExtremeCirc-fig}(i). 
The overall path length $x(1+\alpha_x)$ is comparatively small. 
The certificate path checks a small distance for many (almost all) directions. 

Now, consider a more general environment modelled by a simple polygon $P$ and a fixed 
starting point $s$ in $P$ as given in Figure~\ref{exampleStarCert-fig}(ii). 
For convenience, we make use of an example, where any boundary  point $b$ of $P$ is 
\emph{visible} from $s$, i.e. the segment $sb$ lies fully inside $P$.
Or, the other way round, $s$ lies inside the kernel of $P$. 

\begin{figure}
\begin{center}
\includegraphics[width=\textwidth]{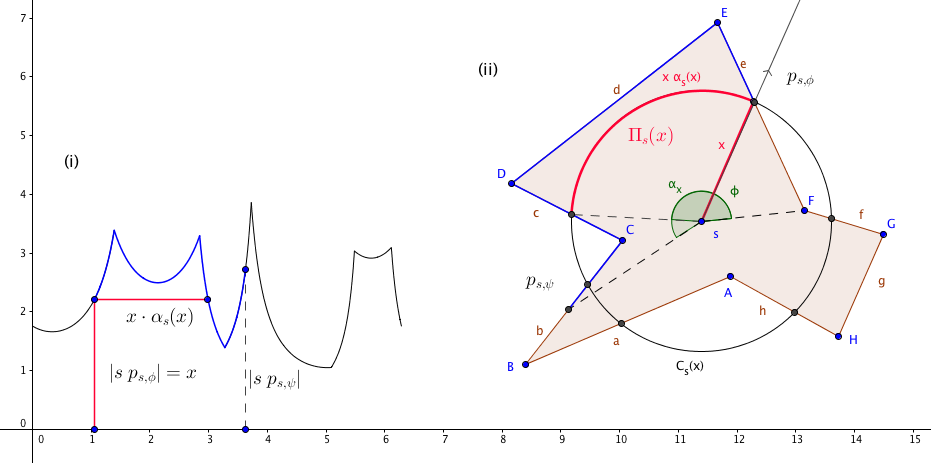}
\caption{ (ii) Consider the  polygon $P$ and a starting point $s$. Let us assume that we 
radially sweep  the boundary of $P$ (starting from point $F$ 
with angle $0$) in counter-clockwise order and calculate the distance from 
the boundary to $s$ for any angle.  
(i) shows this \emph{radial distance function} of the boundary of $P$ from $s$ in polar 
coordinates for the interval $[0,2\pi]$. The blue sub-curve corresponds to the blue boundary part in (ii). 
The certificate path $\Pi_s(x)$ for distance $x$ is the longest path that successfully checks the distance 
$x$ by a circular strategy. This means that it hits the boundary for any starting 
direction $\phi$ of $x$ in $P$.  
In the polar-coordinate setting in (i) this is a path with two line segments 
of length $x$ and $\alpha_x$ that always 
hits the boundary of the radial distance function independent from the starting angle $\phi$. 
}
\label{exampleStarCert-fig}
\end{center}
\end{figure}
For the polygon $P$ and for any radial direction $\phi\in [0,2\pi]$ from $s$, we consider 
the boundary point $p_{s,\phi}$ on $P$ in direction $\phi$. This gives a \emph{radial distance function} 
$f(\phi):=|s\;p_{s,\phi}|$, as depicted in Figure~\ref{exampleStarCert-fig}(i). 

Now, let $p_{s,\phi}$ be a point with  distance $x:=|s\;  p_{s,\phi}|$ in direction $\phi$.
For any circle $C_s(x)$ with radius $x$ around $s$ such that $C_s(x)$ hits the boundary 
of $P$, there will be some maximal arc $\alpha_s(x)$ so that the above simple circular 
strategy is successful.
Note, that this is independent from the starting direction for $x$.
We are looking for the maximum circle segment of $C_s(x)$ that fully lies inside $P$. 

Let $\Pi_s(x)$ denote the \emph{certificate path for distance $x$} of maximal  
length $x(1+\alpha_s(x))$. 
The interpretation is that this finite path will always touch the boundary, independent from the starting direction for $x$.
The adversary can only rotate the environment 
in order to attain a worst case length of $x(1+\alpha_x)$.

Every certificate path for a distance $x$ corresponds to two connected segments in the plot of the radial distance function ; see Figure \ref{exampleStarCert-fig}(i).
The vertical segment of length $x$ represents the radius and the horizontal segment of length $\alpha_x$ represents the arc of the circular strategy.
For any starting angle, this path will touch the boundary of the distance function.

We define he overall \emph{certificate path} $\Pi_s$ in $P$ for a given starting point $s$ 
as the shortest certificate path $\Pi_s(x)$ among all distances $x$.
That is, the certificate for $P$ and $s$ is:
$$\Pi_s:= \min_x \Pi_s(x)= \min_x x(1+\alpha_s(x))\;.$$ 
For both extreme situations in Figure~\ref{ExtremeCirc-fig}, the presented paths 
equal the overall certificate paths for the given environments. 

Finally, consider the case when parts of the boundary are not visible from $s$.
In this case the radial maximal distance computation is no longer 
a function, it results in a curve; see Figure~\ref{exampleGenCert-fig}. 
However, the certificate $\Pi_s(x)=x(1+\alpha_x)$ for distance $x$,
the maximal arc $\alpha_x$ and the overall certificate $\Pi_s$ are still well defined. 
Note that, for the certificate $\Pi_s$ in any polygon $P$ and 
the corresponding arc $\alpha_x$, we can conclude  $\alpha_x\in [0,2\pi]$. 
This holds since the shortest distance $d_s$ from $s$ to the boundary 
always results in a candidate $d_s(1+2\pi)$. All other reasonable distances $x$ 
are larger than $d_s$ and $\alpha_x\leq 2\pi$ holds for the optimal $x$. 

The online approximation of the certificate path of an arbitrary 
unknown polygon with a spiral strategy cannot be competitive in general. 
The corresponding ratio could grow arbitrarily large, as the polygon might wind itself around the spiral.
In more general environments other online strategies have to be applied. 
For example, one could think of a connected sequence of circles $C_i$ with exponentially increasing radii $r^i$.
This would give at least a constant competitive ratio.
However, obtaining the optimal strategy for such cases might be hopeless.

\begin{figure}[th]
\begin{center}
\includegraphics[width=\textwidth]{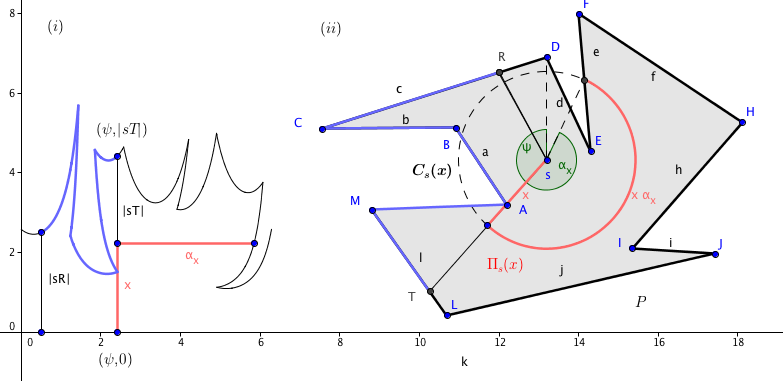}
\caption{ (ii) For a general polygon $P$ and a point $s\in P$ not
the whole boundary might be visible. 
 (i) The radial distance computation appears to be a curve. 
 Nevertheless, the certificate  for distance $x$ and also the 
 overall certificate is well-defined and has the 
 same geometric interpretation. }
\label{exampleGenCert-fig}
\end{center}
\end{figure}

\subsection{Justification of the certificate\label{just-sect}}
The certificate path is an intuitive and simple way of leaving an environment.
It can be computed in polynomial time, as we show in \ref{appendix:polytimecomp-sect}.
We can interpret the certificate as a path that balances depth-first and breadth-first
search for the starting position $s$ in a way that the resulting path is as short as possible.
That way, it outperforms the ultimate optimal escape path at any given starting position 
for all known cases.

For circles, semi-circles with an opening angle $\alpha$ larger than 60 degrees and for \emph{fat} convex bodies, the ultimate optimal escape path equals the diameter; see Figure~\ref{DiameterEx-fig}.
\begin{figure}
\begin{center}
\includegraphics[scale=0.42]{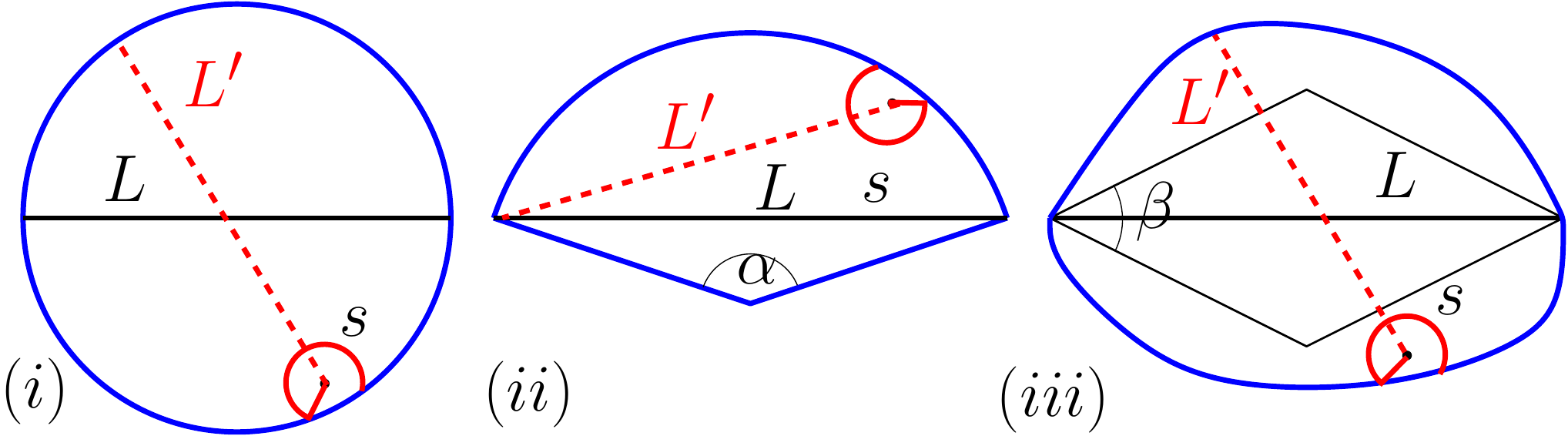}
\caption{Three environments, where the diameter $L$ is the ultimate optimal escape path. 
(i) A circle, (ii) a semi-circle with angle $\alpha\geq 60^\circ$. (iii) A so-called 
\emph{fat} convex body.
An environment is called fat if a rhombus with angle $\beta$ of least $60^\circ$ fits into it.
Then, the diameter of the rhombus equals the diameter of the environment.
The usage of the ultimate optimal escape path (line segment $L$) from $s$ results in the largest distance $x$ 
to the boundary (dashed path $L'$) in the worst case. The certificate path is as least as good as the diameter path. 
}
\label{DiameterEx-fig}
\end{center}
\end{figure}
For any position $s$ the worst case for this escape path is given by a rotation of the environment so that a segment of maximal length $x$ is required.
As the certificate path $\Pi_s$ considers such a path as a possible alternative, the certificate path is as least as good as the diameter for any position.

For the case when the environment is an equilateral triangle or an infinite strip, the certificate outperforms the ultimate optimal escape path for any starting position in the worst case.
As this is not straight forward to see, we give small proofs in the following.

\paragraph{The Equilateral Triangle \& Besicovitch's path.}
Consider an equilateral triangle as depicted in Figure~\ref{ExampleBesi-fig}.
Besicovitch's zig-zag path is the ultimate optimal escape path for this environment; c.f. 
the discussion by Besicovitch~\cite{b-accoe-65} and the proof of 
optimality by Coulton and Movshovich~\cite{cm-btcua-06}.
The zig-zag path is symmetric and consist of three segments of length $\sqrt{\nicefrac{3}{28}}$ each.
This constitutes a total length of $\approx 0.981918$.
An example of a worst case starting point $X_1$ is given in Figure~\ref{ExampleBesi-fig}. 

Consider the following observation.
For starting points somewhere in the center of the triangle, the zig-zag path is worse than 
the largest distance to the boundary;
see for example starting point $s_1$ in Figure~\ref{ExampleBesi-fig}. Thus, the 
certificate is shorter for these points. For starting points close to the boundary 
the certificate is significantly better by a short circular check; see for example starting point $s$ in Figure~\ref{ExampleBesi-fig}.
Now, we give a formal proof, that the zig-zag path can never beat the certificate.
The values for $\alpha=\arcsin\left(\nicefrac{1}{\sqrt{28}} \right)\approx 10.9^\circ$ and $x=\sqrt{\nicefrac{3}{28}}$ are due to Coulton and Movshovich~\cite{cm-btcua-06}.

If Besicovitch's zig-zag path does not hit a point on the boundary with maximum distance away 
from the starting point, it does not hit one of the three vertices of the triangle. 
Only in this case  Besicovitch's zig-zag path can beat the certificate. 
In this case, the Besicovitch path has to make use of at least two segments (each of 
length $x=\sqrt{\nicefrac{3}{28}}$) 
for leaving the triangle. As shown in Figure~\ref{BesiAgainstCert-fig}, only 
a small area close to a vertex of the triangle has to be checked. 
For all those points, we find a circular strategy that is shorter than $2x$, which is always 
required by the zig-zag path.  
Note that we can easily extend this argumentation to the 
family of isosceles triangles analysed by Coulton and Movshovich~\cite{cm-btcua-06,m-bte-11}.  

\begin{figure}
\begin{center}
\includegraphics[width=\textwidth]{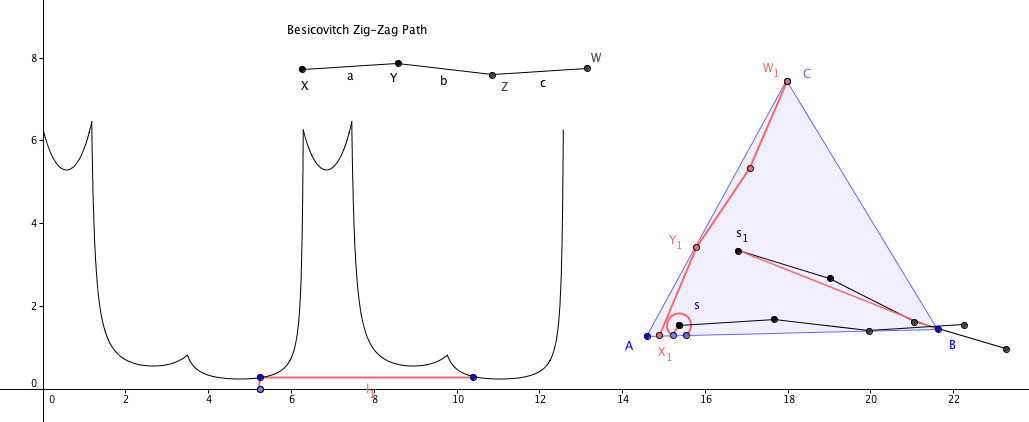}
\caption{Besicovitch's zig-zag path is the ultimate optimal escape path for an equilateral triangle.
A worst case position is given by $X_1$ for example. 
The plot on the left-hand side shows the radial distance curve of point $s$.
The certificate path, for such points close to the boundary, is very short.
In contrast to this, the usage of Besicovitch's path is much worse, as the worst case is attained when leaving the triangle on the boundary farthest away.}
\label{ExampleBesi-fig}
\end{center}
\end{figure}
\begin{figure}
\begin{center}
\includegraphics[scale=0.4]{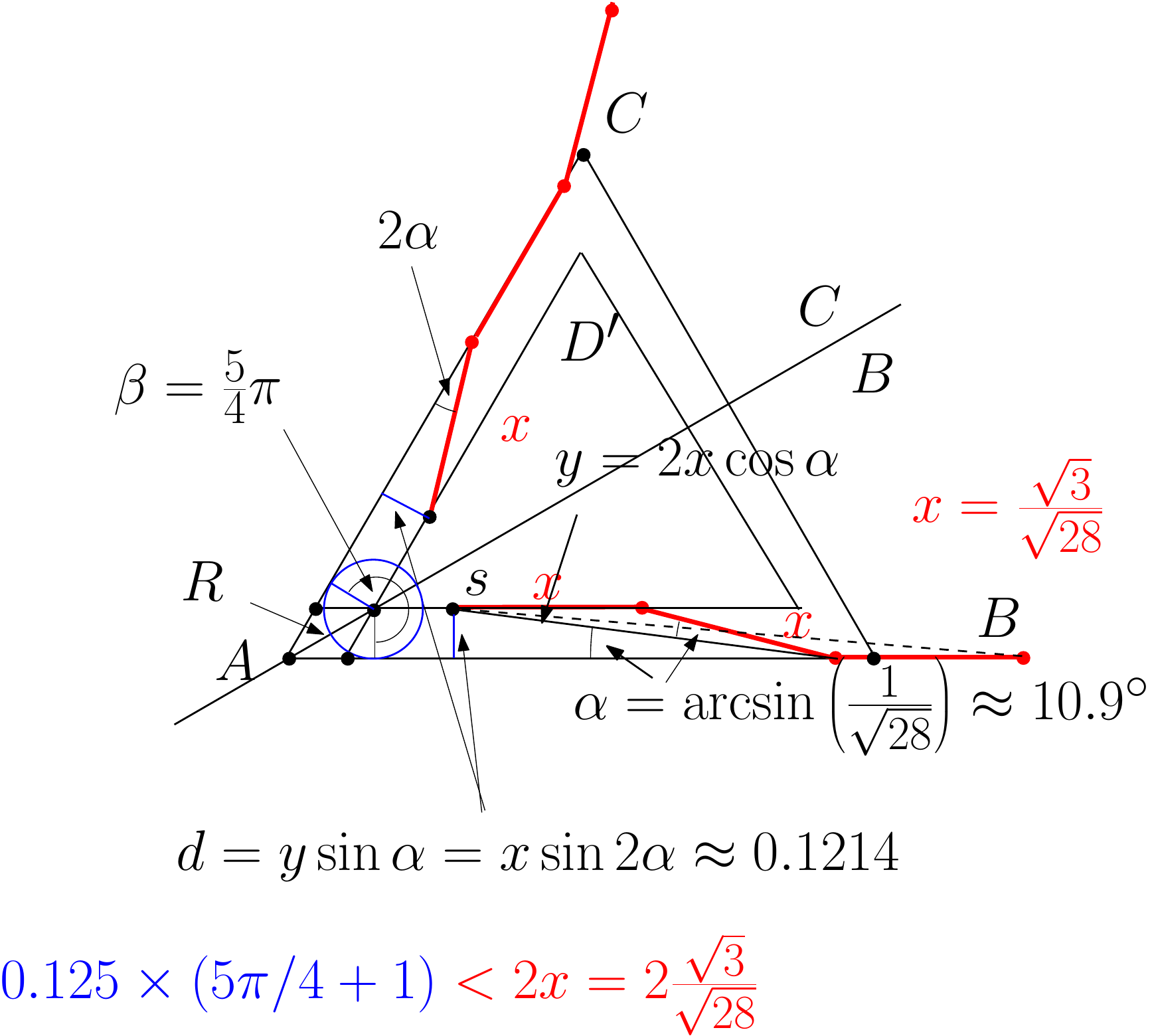}
\caption{For comparing the certificate and the Besicovitch's zig-zag path we only 
have to consider the case when the triangle cannot be rotated so that the
zig-zag path ends in the vertex farthest away. For starting points $s$ below 
the bisector of $A$ and $C$ this can only happen if $s$ is at least vertical distance 
$d=x \sin 2\alpha$ away from segment $AC$.  For starting points above the 
the bisector of $A$ and $C$ this can only happen, if $s$ is at least vertical distance 
$d=x \sin 2\alpha$ away from segment $AB$. 
For both cases it remains to consider the points in the rhomboid $R$. 
Fortunately, we can use  a circle of radius $d'=0.125$  (slightly larger than $d$) 
so that the circle of radius $d'$ with starting point in $R$ 
touches the boundary with an arc of length at most $2\pi-\frac{3}{4}\pi=\frac{5}{4}\pi$ and 
$d'(\frac{5}{4}\pi+1)$ is always strictly smaller than $2x$. } 
\label{BesiAgainstCert-fig}
\end{center}
\end{figure}

\paragraph{The Infinite Strip \& Zalgaller's path.}
\newcommand{\zalgaller}{\ensuremath{\zeta}\xspace}
\newcommand{\zalgPoint}[1]{Z\ensuremath{_{#1}}\xspace}
\newcommand{\width}{\ensuremath{l}\xspace}
\newcommand{\baseline}{x\xspace}
\newcommand{\distance}{\ensuremath{d}\xspace}
\newcommand{\zalgExit}{E\xspace}
\newcommand{\upperFoot}{U\xspace}
Consider the infinite strip of width \width.
Zalgaller's path \zalgaller is an ultimate optimal escape path;
see \cite{z-hgw-61}, \cite{z-qb-05} and an alternative proof in~\cite{cglq-cwors-03}.
It consists of four line segments and two arcs, which are defined by the following values; see Figure \ref{figure:zalgaller}.
\begin{equation*}
\begin{array}{rcl}
\phi & = & \arcsin \left( \frac{1}{6} + \frac{4}{3} \sin \left( \frac{1}{3} \arcsin \frac{17}{64} \right) \right)\\[1em]
\psi & = & \arctan \left( \frac{1}{2} \sec \phi \right)\\[1em]
x	 & = & \sec \phi \approx 1.043590
\end{array}
\end{equation*}
The total length from \zalgPoint{1} to the end at \zalgPoint{7} along \zalgaller is approximately 2.278292 \width.
The path is symmetric with regard to the bisector of the baseline.
The baseline itself has length $\baseline \cdot \width$.

\begin{figure}
	\begin{minipage}{0.48\textwidth}
		\includegraphics[width=\textwidth]{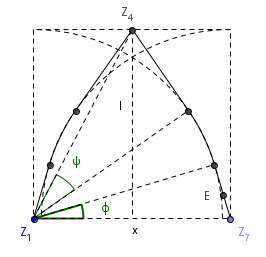}
	\end{minipage}
	\begin{minipage}{0.48\textwidth}
	\includegraphics[width=0.85\textwidth]{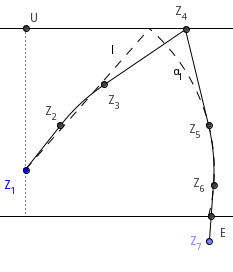}
	\end{minipage}
	\caption{The figure on the left-hand side shows the curve of \zalgaller.
	On the right-hand side, \zalgaller has been rotated around \zalgPoint{1} so that \zalgPoint{4} lies on the upper bound of the strip.
	Dashed lines indicate a configuration of the certificate path which is shorter than \zalgaller for the starting position \zalgPoint{1}.}
	\label{figure:zalgaller}
	\label{ExampleStrip-fig}
\end{figure}

To prove that the certificate path outperforms Zalgaller's escape path for any given starting point in the infinite strip, we proceed in two steps.
At first, we show that for any starting position escaping along \zalgaller takes at least $2.15\cdot \width$ in the worst case.
Then, we show that, in the worst case, the length of the certificate path is always below this bound.

For the first argument, we argue that \zalgaller can always be rotated around the starting point appropriately.
Figure \ref{figure:zalgaller} shows the shape of \zalgaller on the left-hand side.
W.l.o.g. we assume that we follow \zalgaller in clockwise orientation and name the vertices of \zalgaller from $Z_1$ to $Z_7$ appropriately.
We denote \distance the distance from the staring point $s$ (which equals $Z_1$) to the lower bound of the strip.
We may assume that $\distance\in[0;\nicefrac{\width}{2}]$, otherwise we can turn the whole configuration by 180\degree.
Now, we rotate \zalgaller around \zalgPoint{1} so that \zalgPoint{4} lies on the upper bound of the strip and \zalgPoint{3}, \zalgPoint{5} both lie inside the strip; see Figure \ref{figure:zalgaller} on the right-hand side.
This is always possible as the following argument shows.
As $\triangle \zalgPoint{3}\zalgPoint{4}\zalgPoint{5}$ is equal-sided, the segment $\left(\zalgPoint{3}, \zalgPoint{5}\right)$ is parallel to the baseline and $ \measuredangle\zalgPoint{4}\zalgPoint{5}\zalgPoint{1} = 90\degree $, we have $\measuredangle\zalgPoint{5}\zalgPoint{3}\zalgPoint{4} =  90\degree - 2 \phi$.
The congruency of $\triangle \zalgPoint{1}\zalgPoint{5}\zalgPoint{4}$ and $\triangle \zalgPoint{1}\zalgPoint{4}X$ allows to conclude $\measuredangle\zalgPoint{3}\zalgPoint{4}\zalgPoint{1} = 2 \phi - \psi$.
On the one hand, this proves that $\zalgPoint{3}$ indeed lies in the strip as $\measuredangle\zalgPoint{3}\zalgPoint{4}\zalgPoint{1} = 2 \phi - \psi < \arcsin\left( \distance \cdot (0.25 \baseline^2 + 1)^{-1/2} \right) = \measuredangle\upperFoot\zalgPoint{4}\zalgPoint{1}$ for any $\distance\in \left[0; \nicefrac{\width}{2} \right]$.
On the other hand, we can also show that $\zalgPoint{5}$ lies inside the strip as $\measuredangle \upperFoot\zalgPoint{4}\zalgPoint{5} < \measuredangle \upperFoot\zalgPoint{4}\zalgPoint{1} + \measuredangle \zalgPoint{3}\zalgPoint{4}\zalgPoint{5} = \arcsin\left( \distance \cdot (0.25 \baseline^2 + 1)^{-1/2} \right) + 4 \phi < 180\degree$ for any $\distance\in \left[0; \nicefrac{\width}{2} \right]$.

Now that we know that this configuration can be always be attained, we consider point \zalgExit, where \zalgaller leaves the strip.
Summing up the length of \zalgaller from \zalgPoint{1} to E exceeds the lower bound of $2.13$, which gives the proof.
Due to symmetry, the lower bound also holds if \zalgPoint{7} lies in the infinite strip and we follow \zalgaller counter-clockwise.

For the second argument, we consider certificate paths for two different distances, which perform better than \zalgaller for a given starting position.
Again, we denote \distance the distance from the starting position to the lower bound of the strip and assume w.l.o.g. that \distance $\in[0;\nicefrac{\width}{2}]$.
We split this interval in half.
For starting positions with $\distance\in[0,\nicefrac{\width}{4}]$, we know that $\Pi_s(\distance)$ is a certificate path and $\alpha_\distance$ is a full circle.
In this case, the length of $\Pi_s$ is well below $2.11 \width$.
Otherwise, we have $\distance\in [\nicefrac{\width}{4},\nicefrac{\width}{2}]$.
In this case, we consider the certificate path for distance \width.
The length of $\Pi_s(\width)$ is strictly decreasing with growing $\distance\in[\nicefrac{\width}{4},\nicefrac{\width}{2}]$.
Consequently, the length of $\Pi_s(\width)$ has a maximum value for $\distance = \nicefrac{\width}{2}$ and the length of $\Pi_s(\width)$ is slightly below $2.11 \width$.

We saw that for any starting point in the strip, there are certificate paths for certain distances that are shorter than $2.11 \width$.
As the certificate path is the overall minimum it is below this bound as well.
For any starting point, Zalgaller's path \zalgaller can be rotated so that leaving the strip takes at most $2.15 \width$.
Consequently, the certificate path outperforms \zalgaller for any starting point.

\paragraph{The relation to breadth \& depth first search.}
Finally, we would like to relate the certificate to a discrete cost measure Kirkpatrick 
introduces in \cite{k-hd-09}. He analyses the problem of digging for oil at $m$ different 
locations $s_i$, where $|s_i|$ denotes the (unknown) distance to the source of the oil at 
the corresponding location. In this scenario, no extra costs arise for switching the location.
The challenge is to find a strategy that reaches one source of oil while assuring a small 
overall digging effort.

At first, Kirkpatrick considers (partially informed) strategies.
Those are given all distances from the top to the sources of oil, but not the corresponding location:
In case the distances \abs{s_i} have about the same length at all locations, he states that a 
depth-first searching strategy is certainly effective. Thus, a single location can be chosen 
for digging, as Figure~\ref{CorrFind-fig}(i) indicates.
Although at the chosen location, the distance to the source might be greatest, the 
digging costs are almost optimal. In case the distance to the source of a single location is 
significantly shorter than all others, a breadth-first searching strategy performs best.
Figure~\ref{CorrFind-fig}(ii) shows that digging at every location with a certain effort $x$ 
still achieves a small overall effort of $x\cdot m$ in the worst case.
These two extreme situations are similar to the cases outlined in Section~\ref{extremedef-sect} 
and depicted in Figure~\ref{ExtremeCirc-fig}.
For the general case, Kirkpatrick suggests to use a hybrid strategy.
If $f_1\geq f_2 \geq \cdots \geq f_m$ denotes the sorted set of distances, he suggests to 
choose $i$ so that $i \cdot f_i$ is minimal.
The hybrid strategy digs at $i$ (arbitrarily chosen) locations up to the same depth $f_i$.
In the worst case, this strategy reaches a source at the last location with a final effort 
of $i \cdot f_i$; see Figure~\ref{CorrFind-fig}(iii). Among all such partially informed 
strategies, this hybrid strategy is certainly optimal and achieves a maximum digging 
effort of $\lambda := i \cdot f_i$. Similar to this hybrid strategy, we defined the certificate 
path in the previous section. The certificate path can also be considered as a mixture of 
depth-first and breadth-first searching. However, the certificate path models a motion.
The effort of the digging strategy to explore a certain depth depends on the product of the 
number chosen locations and the digging depth. In contrast to this, the effort of the certificate 
path depends on the sum of the searching depth and width.
Consequently, the certificate path is a stronger cost measure than the equivalent of 
the hybrid digging strategy in the plane.

During the further analysis, Kirkpatrick compares a totally uninformed digging strategy to 
the optimal hybrid strategy. He proves that this strategy approximates the hybrid 
strategy in \bigO{\lambda \log(\min(m,\lambda)}
and shows that this factor is tight. Similar to his approach, we compare the certificate path 
to a totally uninformed spiral strategy and obtain a constant competitive ratio. 
David Kirkpatrick~\cite{k-pc-15} brought up the question about what happens in a continuous setting. 
Note that the game is a quite different in this case, as we take movements
in the plane into account and also require a starting orientation. 
\begin{figure}[ht]
\begin{center}
\includegraphics[scale=0.33]{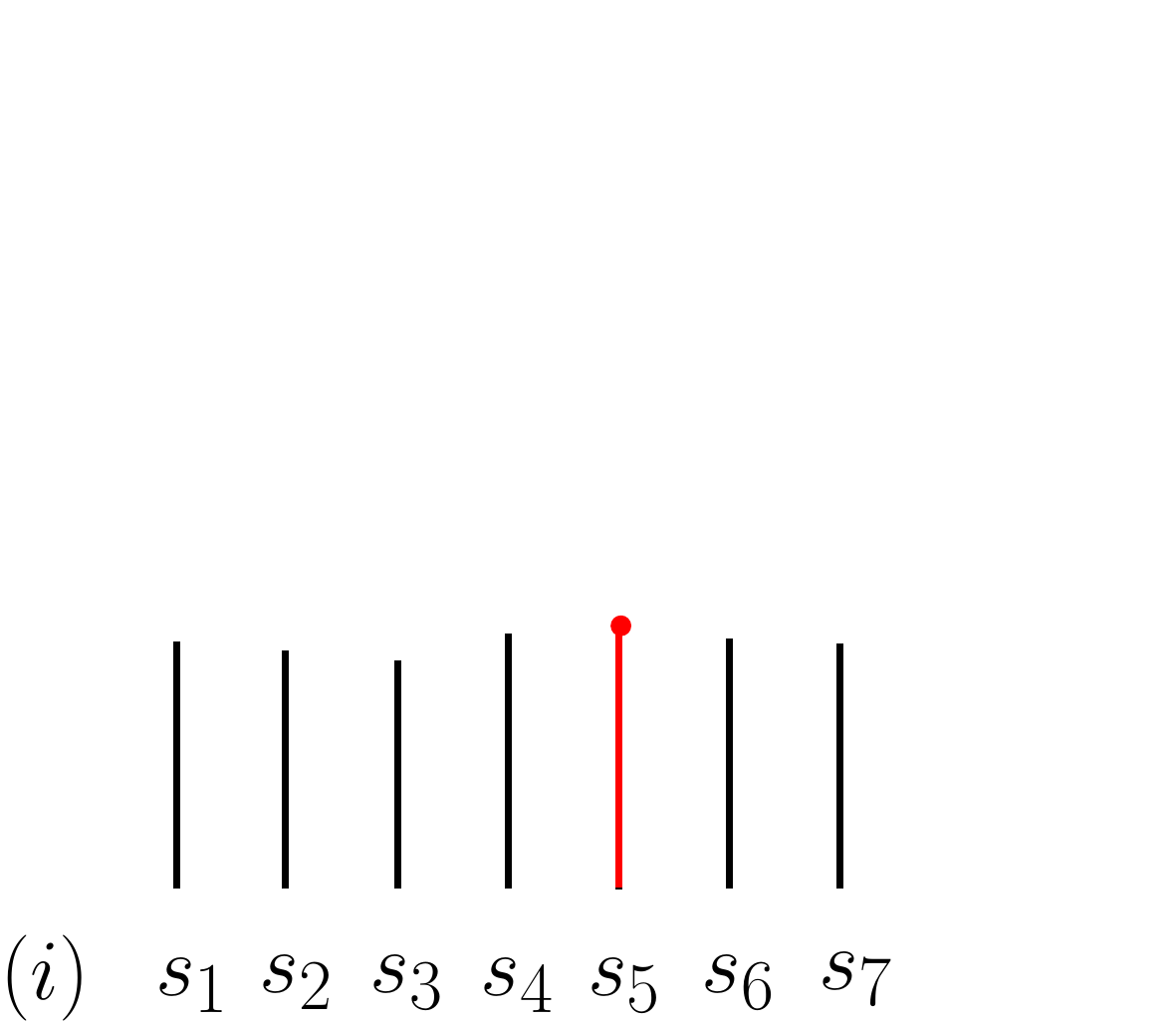}
\includegraphics[scale=0.33]{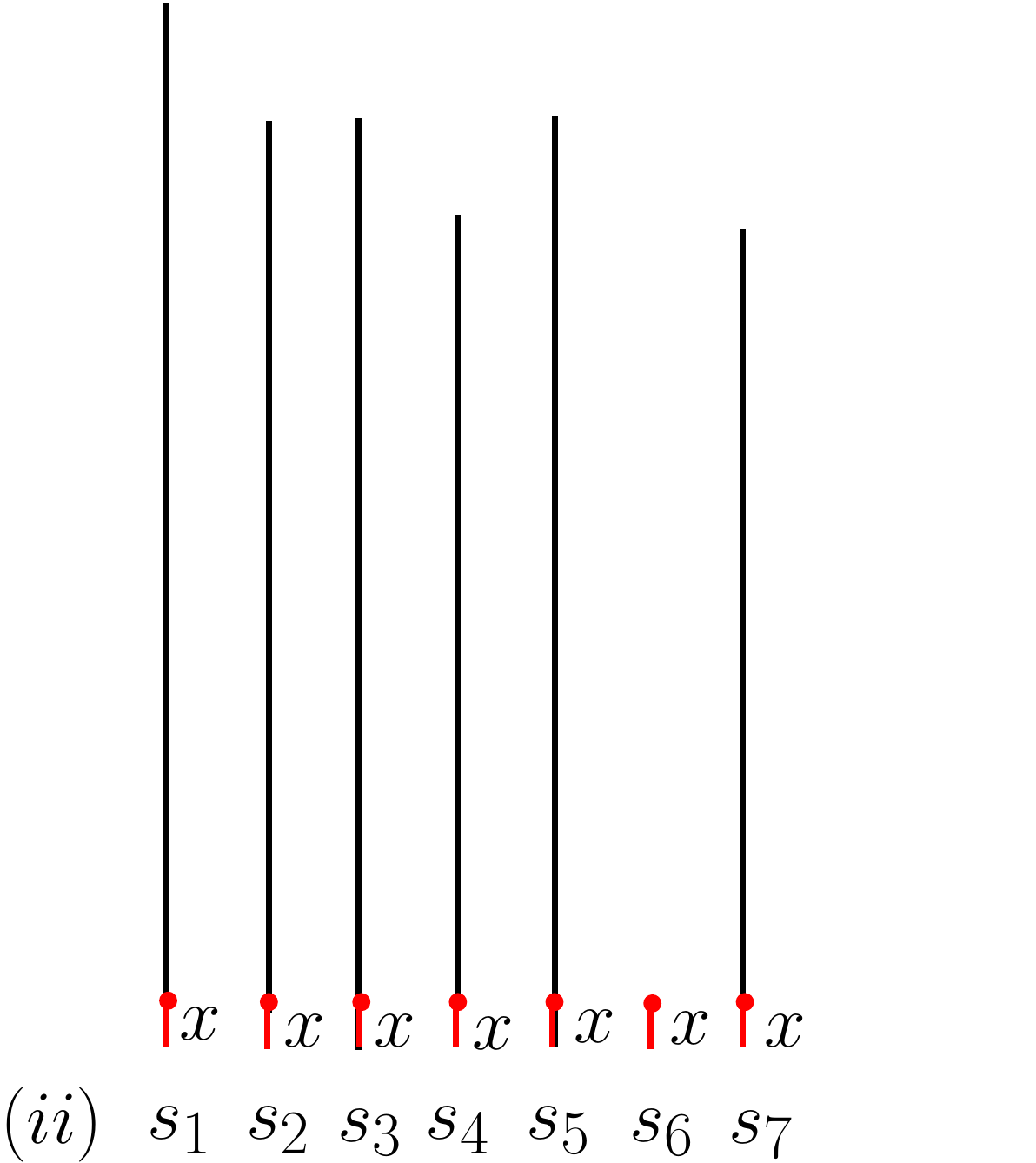}
\includegraphics[scale=0.33]{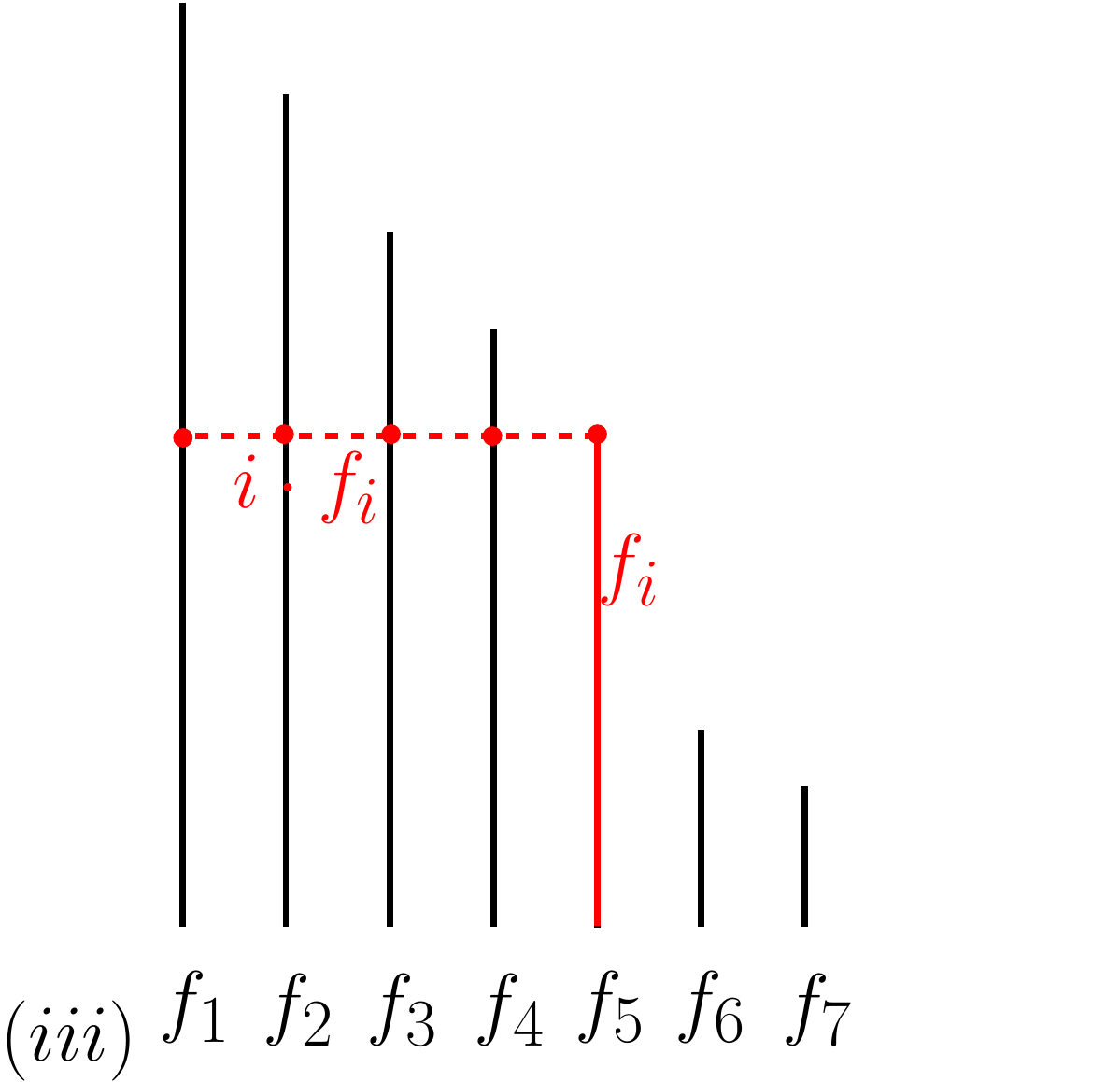}
\caption{
Online searching for the end of a segment (or digging for oil) for $m=7$ segments of unknown length.
There are two extreme cases:
(i) All segments have about the same length.
It is reasonable to move along an arbitrary segment up to the end, which is almost optimal.
(ii) One segment is significantly shorter than all other segments.
One will find the end of a shortest segment by checking all segments with its length.
(iii) In case the length of each segment is known, but not the corresponding number of segment.
There is always an optimal strategy:
Assume that $f_1\geq f_2 \geq \cdots \geq f_m$ is the decreasing order of the length of all segments.
An optimal strategy explores $i$ (arbitrary) segments up to depth $f_i$, where $i$ is chosen 
so that $i \cdot f_i =\min_{1 \leq k \leq m} k\cdot f_k$.}
\label{CorrFind-fig}
\end{center}
\end{figure}

\section{Online approximation of the certificate path\label{onlappe-sect}}
We are searching for a reasonable escape strategy in an unknown environment.
As shown in the previous section, the certificate path and its length is a reasonable candidate for comparisons.
Let us assume that $x(1+\alpha_x)$ is the length of the certificate for some polygon~$P$ and for 
an arbitrary distance $x$. We can assume that $\alpha_x\in [0,2\pi]$.
This holds since the shortest distance $d_s$ from $s$ to the boundary always results in a candidate $d_s(1+2\pi)$.
All other reasonable distances $x$ are larger than $d_s$ and $\alpha_x\leq 2\pi$ holds for the optimal~$x$. 

Similar to the considerations of Kirkpatrick (see Section~\ref{just-sect}), we would like to guarantee that 
we leave the polygon $P$ if we have overrun the distance $x$ more than $\alpha_x$ times.
This means that the boundary should not wind arbitrarily around $s$. 
Therefore, we restrict our consideration to a position $s$ in the kernel of a star-shaped polygon 
so that there is a single (unknown) distance to the outer boundary in any direction. 

In this case we apply the following logarithmic spiral strategy.  
A logarithmic spiral can be defined by polar coordinates $(\varphi,a\cdot e^{\varphi\cot(\beta)})$ 
for $\varphi\in (-\infty,\infty)$, a constant $a$ and an eccentricity $\beta$ as shown in Figure~\ref{SpiralStrat-fig}.
For an angle $\phi$, the path length of the spiral up to point $(\phi,a\cdot e^{\phi\cot(\beta)})$ is 
given by $\frac{a}{\cos\beta}e^{\phi\cot(\beta)}$. 
\begin{figure}[ht]
\begin{center}
\includegraphics[scale=0.4]{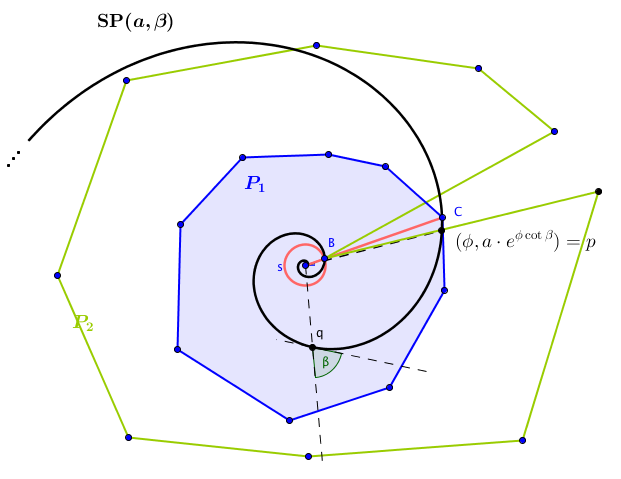}
\caption{We apply a spiral strategy for unknown polygons and an unknown starting point $s$ in the kernel.
The eccentricity $\beta$ is chosen so that the two extreme cases have the same ratio.
For both polygons $P_1$ and $P_2$, the strategy passes the boundary at point 
$p=(\phi,a\cdot e^{\phi\cot\beta})$ close to $C$.
The path length of the strategy for leaving the polygons is roughly the same. 
The certificate for $P_1$ has length $|s\;C|$ (checking the maximal distance to the boundary of $P_1$), 
whereas the certificate for $P_2$ has length $|s\;B|(1+2\pi$) with $|s\;C|=e^{2\pi\cot\beta} |s\;B|$ 
(checking the smallest distance to the boundary of $P_2$ with a full circle).
We can construct such examples for any point $p$ on the spiral.}
\label{SpiralStrat-fig}
\end{center}
\end{figure}
For our purpose we choose $\beta$ so that the two extreme cases of the certificate attain the same 
ratio; see Figure~\ref{SpiralStrat-fig}.
We can assume that the certificate of the environment is $x(1+\alpha_x)$ for an arbitrary distance $x$ 
and an angle $\alpha_x\in[0,2\pi]$. Since the spiral strategy checks the distances in a monotonically 
increasing and periodical way, there has to be some angle $\phi$ so that $x=a\cdot e^{(\phi-\alpha_x)\cot(\beta)}$ 
holds. This means that in the worst case, the spiral strategy will leave the environment at 
point $p=(\phi,a\cdot e^{\phi\cot(\beta)})$ with path length $\frac{a}{\cos\beta}  \cdot e^{\phi\cot(\beta)}$.
Exactly $\alpha_x$ distances of length $x$ have been exceeded, which means that the boundary 
has been reached. (Note that, this might not hold for points outside the kernel.)

We would like to choose $\beta$ so that the two extreme cases $\alpha_x=0$ and $\alpha_x=2\pi$ 
have the same ratio. Thus, we are searching for an angle $\beta$ so that 
\begin{eqnarray}
\frac{\frac{a}{\cos\beta} \cdot e^{\phi\cot \beta }}
{a\cdot e^{\phi\cot \beta }(1+0)} & = & \frac{\frac{a}{\cos\beta} \cdot e^{\phi\cot \beta }}
{a\cdot e^{(\phi-2\pi)\cot \beta}(1+2\pi)}\;\;\Leftrightarrow \label{Equal-equ1}\\
1 &=& \frac{e^{2\pi\cot \beta }}{1+ 2\pi }\label{Equal-equ2}
\end{eqnarray}
holds. 
The right-hand side of Equation~(\ref{Equal-equ1}) shows the case where $x_2=a\cdot e^{(\phi-2\pi)\cot(\beta)}$ and $\alpha_{x_2}=2\pi$ gives the certificate and 
the left-hand side shows the case that $x_1= a\cdot e^{\phi\cot(\beta)}$ and $\alpha_{x_1}=0$ gives 
the certificate $x_i(1+\alpha_{x_i})$, respectively.  In both cases the spiral will detect the boundary 
latest at point $p=(\phi,a\cdot e^{\phi\cot(\alpha)})$, because the spiral checks 
$2\pi$ distances larger than or equal to $x_2$ and at least one distance $x_1$. 
Figure~\ref{SpiralStrat-fig} 
shows the construction of  corresponding polygons $P_1$ and $P_2$.

The solution of Equation~(\ref{Equal-equ2}) gives 
$\beta=\mbox{arccot}\left(  \frac{\ln(2\pi +1)}{2\pi}\right) = 1.264714\ldots$ and 
the ratio is $\frac{1}{\cos\beta}= 3.3186738\ldots$. 
Fortunately, for all other values $x=a\cdot e^{(\phi-\gamma)\cot \beta}$ 
and $\alpha_x= \gamma$ for $\gamma\in (0,2\pi)$ the ratio is 
smaller than these two extremes. 
The overall ratio function is 
\begin{equation}\label{solutionOther}
f(\gamma)=\frac{\frac{a}{\cos\beta} \cdot e^{\phi\cot\beta}}
{a\cdot e^{(\phi-\gamma)\cot\beta}(1+\gamma)}= \frac{e^{\gamma\cot\beta}}{\cos\beta (1+ \gamma)}\mbox{ for } \gamma\in[0,2\pi]
\end{equation}
and Figure~\ref{PlotRatio-fig} shows the plot of all possible ratios of the spiral strategy with 
eccentricity $\beta$.
\begin{figure}
\begin{center}
\includegraphics[scale=0.325]{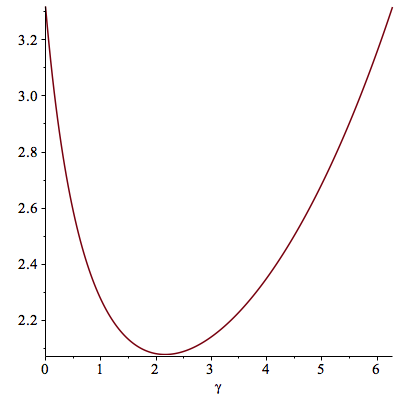}
\caption{The graph of the ratio function $f$ of Equation (\ref{solutionOther}) for the spiral strategy with 
eccentricity $\beta\approx 1.26471$. The two extreme cases $0$ and $2\pi$ have the same 
ratio $\approx 3.318674$ and all other ratios are strictly smaller.}
\label{PlotRatio-fig}
\end{center}
\end{figure}
Altogether, we have the following result.
\begin{theorem}\label{OptKernel-thm}
There is a spiral strategy for any unknown starting point $s$ inside 
the kernel of  an unknown environment $P$ that always hits the boundary with path length 
smaller than  $3.318674$ times the length of the corresponding certificate 
for $s$ and $P$. 
\end{theorem}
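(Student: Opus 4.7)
The plan is to analyze the logarithmic spiral strategy proposed just before the theorem and show that no certificate for the given environment forces the spiral to do worse than the claimed ratio. I would fix the parameterization $(\varphi, a\cdot e^{\varphi \cot\beta})$ for some scale $a$ aligned to the starting direction and let $\beta$ be the parameter to be optimized. Recall that the arc length from the pole up to polar angle $\phi$ equals $\frac{a}{\cos\beta}\,e^{\phi\cot\beta}$.

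Next I would formalize the exit argument using the kernel assumption. Since $s$ is in the kernel, the environment boundary is described by a single-valued radial distance function $g:[0,2\pi]\to\mathbb{R}_{>0}$. Fix any candidate certificate distance $x$ with corresponding arc $\alpha_x\in[0,2\pi]$. By definition of $\alpha_x$, every angular window of length $\alpha_x$ contains some direction where the boundary lies at distance at most $x$. Let $\phi$ be the angle at which the spiral first reaches radius $x$, i.e., $a\cdot e^{(\phi-\alpha_x)\cot\beta}=x$ after a shift. During the interval $[\phi-\alpha_x,\phi]$ the spiral radius is at least $x$, so by the kernel property it must cross the boundary no later than angle $\phi$. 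Hence the worst-case path length to leave $P$ is at most $\frac{a}{\cos\beta}\,e^{\phi\cot\beta}$, giving the ratio
\[
f(\alpha_x)=\frac{\frac{a}{\cos\beta}e^{\phi\cot\beta}}{a\,e^{(\phi-\alpha_x)\cot\beta}(1+\alpha_x)}=\frac{e^{\alpha_x\cot\beta}}{\cos\beta\,(1+\alpha_x)}.
\]

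I would then choose $\beta$ so that $f(0)=f(2\pi)$, yielding $e^{2\pi\cot\beta}=1+2\pi$ and thus $\beta=\mathrm{arccot}\!\bigl(\ln(1+2\pi)/2\pi\bigr)\approx 1.264714$, with common endpoint value $1/\cos\beta\approx 3.318674$. The key technical step is to confirm this is the maximum of $f$ on $[0,2\pi]$ and not merely a value attained at the endpoints. Differentiating gives $f'(\gamma)=0$ iff $\cot\beta\,(1+\gamma)=1$, i.e., a single interior critical point $\gamma^\ast=\tan\beta-1$, which for our $\beta$ lies in $(0,2\pi)$. Because $f>0$, $f(0)=f(2\pi)$, and $f$ has exactly one interior critical point, that point must be a minimum of $f$. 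Therefore the maximum of $f$ on $[0,2\pi]$ is attained only at the endpoints and equals $1/\cos\beta<3.318674$.

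The main obstacle I anticipate is being precise about the exit argument for arbitrary certificates, in particular when $\alpha_x=0$ (degenerate radial escape) and when the optimal certificate arises from a distance $x$ attained on the spiral at a point that is not perfectly aligned with the starting direction. Both cases I would handle by noting that the spiral strategy needs only a starting orientation and that the ratio bound is rotation-invariant; the worst case over all rotations of $P$ around $s$ is exactly captured by the bound on $f$ above. The remaining verification (single critical point, sign of $f''$) is a short calculation that I would carry out explicitly to rule out a boundary maximum being accidentally beaten from inside the interval.
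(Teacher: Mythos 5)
Your proposal is correct and follows essentially the same route as the paper: the same logarithmic spiral with $\beta$ chosen by equalizing the extreme cases $\alpha_x=0$ and $\alpha_x=2\pi$, the same exit argument via the kernel property, and the same ratio function $f(\gamma)=e^{\gamma\cot\beta}/(\cos\beta\,(1+\gamma))$. The only difference is that you verify the interior behaviour of $f$ by an explicit derivative computation (single critical point $\gamma^\ast=\tan\beta-1$, a minimum), whereas the paper merely asserts this and refers to a plot; your added calculation is a welcome tightening, not a deviation.
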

\begin{proof}
Assume that the certificate of $P$ and $s$ is given by $x(1+\alpha_x)$. 
We can set $\gamma:=\alpha_x$ and we will also find an angle $\phi$ so 
that $x=a\cdot e^{(\phi-\gamma)\cot \beta}$ holds. 
At point $p=(\phi,a\cdot e^{\phi\cot\beta})$ the spiral  has subsumed 
an arc of angle $\gamma$ with distances $x$, so the spiral strategy will leave $P$ 
at $p$ in the worst case.
(Note that, if the start point is not inside the kernel, this might not be true!)
The ratio is given by $f(\gamma)$ as in (\ref{solutionOther}) and 
Figure~\ref{PlotRatio-fig}. In the worst case for the strategy $\gamma$ is either $0$ 
or $2\pi$ for the ratio $3.318674$, respectively.
\end{proof}
We have designed a spiral strategy for some reasonable environments. 
In the next section, we give a lower bound that shows that this strategy is (almost) optimal 
for these environments. 

Note that a spiral strategy for the online approximation of the certificate path of an arbitrary 
unknown polygon and position cannot be competitive in general. The polygon might itself wind 
around the spiral. The ratio against the certificate might be arbitrarily large, consequently.
In more general environments  other online strategies have to be applied. 
A potential strategy might be a connected sequence of circles $C_i$ with exponentially 
increasing radii $r^i$. This online strategy should result in a constant competitive ratio.
Obtaining the optimal strategy for such cases might be complicated and gives rise to 
future work.

\section{Lower bound construction: Online strategy against the certificate}\label{opt-sect}
 \begin{theorem}\label{Opt2-thm}
Any strategy that escapes from an unknown environment $P$ in unknown 
position $s$ will achieve a competitive factor of at least $3.313126$  against the 
length of a corresponding certificate for $s$ and $P$ in the worst-case.
\end{theorem}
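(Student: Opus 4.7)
The plan is to exhibit, for every online escape strategy, an adversarial star-shaped environment forcing a ratio of at least $3.313126$ against the certificate. The construction is the mirror of the two-extremes reasoning that established Theorem~\ref{OptKernel-thm}, but now from the adversary's point of view.

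As a first step, I would argue that without loss of generality the strategy may be parameterized as a monotone outward spiral $r=g(\theta)$ for some non-decreasing continuous function $g$, with arc length $L(\theta)=\int_0^\theta\sqrt{g(\tau)^2+g'(\tau)^2}\,d\tau$. Intuitively, any self-intersection, angular reversal, or radial backtracking wastes arc length without adding new angular coverage, and a rearrangement argument should turn any strategy into a monotone spiral whose ratio against the adversary family is no better. Under this reduction, I would build two types of adversarial polygons parameterized by the intended exit angle $\theta^\ast$:
(i) a near-disc environment whose boundary lies at radius $g(\theta^\ast)$ in the unique direction $\phi^\ast\equiv\theta^\ast\pmod{2\pi}$, so that the certificate is approximately $g(\theta^\ast)$ and the ratio becomes $L(\theta^\ast)/g(\theta^\ast)$;
and (ii) a narrow-inlet environment whose radial distance function roughly traces the strategy's own envelope, but has a thin channel placed so that the inlet angle can be reached only one full revolution later, forcing the strategy to traverse almost an extra $2\pi$ before exiting. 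The certificate is then approximately $g(\theta^\ast-2\pi)(1+2\pi)$, forcing the ratio $L(\theta^\ast)/(g(\theta^\ast-2\pi)(1+2\pi))$.

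Combining, any competitive strategy must satisfy both $L(\theta^\ast)\le\rho\, g(\theta^\ast)$ and $L(\theta^\ast)\le\rho(1+2\pi)\,g(\theta^\ast-2\pi)$ for all $\theta^\ast\ge 2\pi$. A variational/elimination argument then identifies the extremizing $g$ as a logarithmic spiral whose eccentricity is fixed by a balancing condition analogous to equation~(\ref{Equal-equ2}). The explicit constant $3.313126$, slightly below the matching upper bound $3.318674$, arises because realizing the narrow-inlet adversary as a genuine simple star-shaped polygon with positive inlet width and a continuous boundary forces a small controlled loss compared to the idealized degenerate limit used in the upper bound analysis.

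The main obstacle is rigorously justifying the monotone-spiral reduction: online strategies in the plane may be arbitrary, and angular reversals or self-crossings complicate the definition of the ``most recent'' revolution used in the narrow-inlet construction. A secondary difficulty, as the paper anticipates, is quantifying the polygonalization loss precisely enough to pin down the explicit constant $3.313126$, which requires careful case analysis on the geometry of the inlet and the strategy's local behaviour near it.
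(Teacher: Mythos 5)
Your high-level picture --- playing a ``deep in one direction'' adversary against a ``shallow everywhere, one full turn behind'' adversary and balancing the two --- matches the intent of the paper's proof, but the two steps you yourself flag as obstacles are exactly where all the work lies, and your proposed routes around them do not go through. First, the paper never reduces an arbitrary strategy to a monotone spiral $r=g(\theta)$. A continuous rearrangement claim of the form ``reversals, self-crossings and radial backtracking never help'' is precisely the kind of statement that remains open in neighbouring problems (the paper itself points out that the logarithmic-spiral conjecture for shoreline search is unresolved). Instead, the paper discretizes: it fixes $n$ equiangular rays, records only the record-breaking visits $x_{i,j_i}$ of the strategy on each ray, shortcuts the trajectory to a polygonal chain between neighbouring rays, and then shows by a triangle-inequality sorting argument (detailed in the appendix of \cite{l-oss-10}) that visiting the rays in periodic order with the \emph{sorted} depths can only shorten the path while only increasing the two relevant certificates. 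The identification of the extremal strategy as an exponential sequence $x_i=a^i$ is then not a ``variational/elimination argument'' but an application of Gal's theorem on unimodal functionals \cite{g-sg-80,gc-oefmp-76,s-lbogs-01} to the functionals $F^1_k$ and $F^2_k$ of (\ref{Funk1-equ})--(\ref{Funk2-equ}); without this machinery your plan has no way to exclude non-exponential strategies, and this omission is a genuine gap rather than a technicality.

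Second, your explanation of the constant is wrong. The gap between $3.313126$ and the upper bound $3.318674$ has nothing to do with realizing the narrow inlet as a simple polygon of positive width; the adversarial polygons $P_1$ and $P_2$ can be taken arbitrarily close to the degenerate limits, and the discretization loss (certificate $x_M(1+\frac{2\pi}{n})$ instead of $x_M$, chords instead of arcs) vanishes as $n\to\infty$. The slack comes entirely from the proof technique: the paper lower-bounds the \emph{maximum} of the two ratios by half of their \emph{sum}, optimizes the sum functional $g_n(a)$ of (\ref{LastRatio}) over exponential strategies to get approximately $6.6252$, and concludes that one ratio is at least $3.313126$. Since the minimizer of the sum need not balance the two ratios, the inequality $\max\geq\frac{1}{2}(\mbox{sum})$ is not tight, and this averaging step is the sole source of the discrepancy. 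Your formulation, which imposes both constraints $L(\theta^\ast)\le\rho\,g(\theta^\ast)$ and $L(\theta^\ast)\le\rho(1+2\pi)\,g(\theta^\ast-2\pi)$ simultaneously, is aiming at the stronger bound $3.318674$ --- a bound the paper does not claim --- which should itself be a warning that the simultaneous-constraint route has not been made rigorous.
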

\begin{proof} 
Let us assume that a strategy $S$ is given that attains a better ratio $C$ in the worst case. 
We consider a bunch of $n$ rays emanating from $s$ with equidistant angle $\frac{2\pi}{n}$ 
as depicted in Figure~\ref{StratEx2-fig} for $n=8$. 

\begin{figure}
\includegraphics[width=0.499\textwidth,page=1]{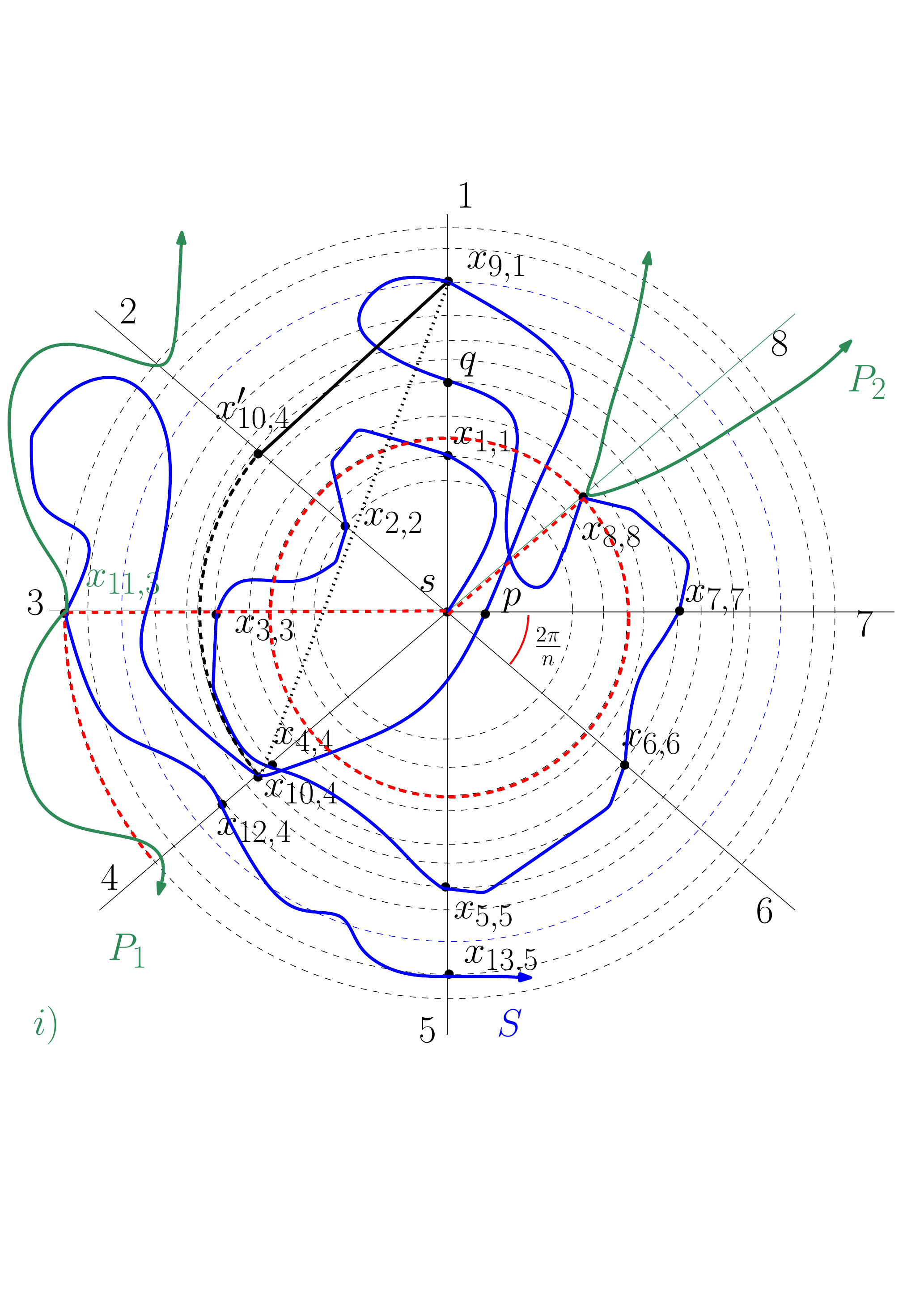}
\includegraphics[width=0.499\textwidth,page=2]{StratEx2.pdf}
\caption{i) The strategy $S$ results in a sequence $S'$ that represents the visits on $n$ 
rays successively. There will be a next entry $x_{i,j_i}$ if the strategy exceeds the distance 
on ray $j_i$. For two successive extensions on the same ray only the last entry is registered in $S'$. 
For  the subsequence $S'_{13}=(x_{1,1},x_{2,2},\ldots,x_{13,5})$ there will be a 
last visit on each ray, a minimal distance $x_m=x_{8,8}$ on ray $8$ and a maximal distance $x_M=x_{11,3}$
on ray $3$. 
These values gives rise to the construction of certificates for $S$ as sketched by the polygons $P_1$ and $P_2$. 
There are polygons $P_1$ and $P_2$ with certificates 
$x_M(1+\frac{2\pi}{n})$ and $x_m(1+2\pi)$ and so far $S$ has not been escaped 
from neither $P_1$ nor $P_2$. In $S'_{13}$ the direct distance between two successive points, 
for example $x_{9,1}$ and $x_{10,4}$, is shorter than the original path length on $S$ and 
we can further shorten the distance by assuming that $x_{10,4}$ is on the neighbouring ray 
as depicted by $x_{10,4}'$.
ii) We sort the entries of $S'_k$ into a sequence $X_k$ and visit the rays in 
increasing distance and periodic order. The path length of $X_k$ is not larger than the 
original path length of $S_k$ and the corresponding certificates for the maximal  and minimal value 
$x_M'=x_M$  and $x_m'>x_m$ are not smaller. Thus, the sum of the corresponding ratios gives a lower 
bound for the sum of the original ratios.
}
\label{StratEx2-fig}
\end{figure}

The strategy $S$ will successively extend the distances 
from $s$ also along the rays. Let the sequence $S'=(x_{1,j_1},x_{2,j_2},x_{3,j_3},\ldots)$ describe  
successive visits of the $n$ rays by the strategy. In $x_{i,j_i}$ the entry  $i$ 
stands for the order and $j_i$ stands for the ray.
In $S'$ we only register a visit on ray $j_i$, if it exceeds the previous 
visit on the ray $j_i$. Furthermore, if the distance at ray $j_i$ is exceeded in two successive entries 
we do not register the first visit in the sequence~$S'$.  

In Figure~\ref{StratEx2-fig}(i) we have registered $13$ successive visits $x_{i,j_i}$. Here for example 
the visit of ray $7$ at point $q$ between $x_{9,1}$ and $x_{10,4}$ was not registered in $S'$ because it 
does not improve the distance of the former visit $x_{7,7}$. Additionally, the visit of ray $1$ at point 
$q$ just before $x_{9,1}$ 
improved the distance $x_{1,1}$ but it was further improved on $x_{9,1}$  and in between 
no other ray was improved. For any continuous strategy $S$ we will find such an infinite sequence $S'$. 
Let $x_{i,j_i}$ denote the visit and also the distance to the starting point $s$. 

Let us assume that we stop the strategy $S$ at of some ray $j_k$ where the distance was
just exceeded on ray $j_k$, so $S'$ has $k$ steps. Let $S_k$ denote the sub-strategy of $S$ and 
$S'_k$ the corresponding subsequence.
There will be at least two 
ratios for $S_k$ that correspond to values of $S'_k$ as follows. 
In $S'_k$ we consider the last 
visits on each ray which gives the corresponding maximal visited distance to $s$ for each ray. 
There will be an overall maximal distance $x_M$ on some ray $M_j$ 
and a minimal distance $x_m$ on some other ray $m_j$. 
In Figure~\ref{StratEx2-fig}(i)  we have stopped the strategy $S$ at $x_{13,5}$ on ray $5$ and 
in $S_{13}'$ the minimal distance for the last \emph{round} is given by $x_m=x_{8,8}$ on ray $8$ 
and the maximal distance is given by $x_M=x_{11,3}$ on ray $3$. 

We can construct polygons $P_1$ and $P_2$ 
so that $x_M(1+\frac{2\pi}{n})$ is a certificate for $P_1$ and 
 $x_m(1+2\pi)$ is a certificate for $P_2$. In the first case all other rays have been visited 
 with depth smaller or equal to $x_M$ and we build a polygon $P_1$ outside $S_k$ 
 that visits any ray at $x_M-\epsilon$  and ray $M_j$ at $x_M$. 
 This means that a circular strategy with $x_M$ 
 and an arc of length $x_M\frac{2\pi}{n}$ will be sufficient and gives the
 certificate for $P_1$ (or at least an upper bound for the certificate of $P_1$). 
 See for example the polygon $P_1$ sketched in Figure~\ref{StratEx2-fig}(i)
 for the maximal visit $x_M=x_{13,5}$. On the other hand for the minimal value $x_m$ 
 we construct a polygon $P_2$ that hits $x_m$ on ray $m_j$ 
 but runs arbitrarily far away from  $S_k$ in any other direction. Thus, $x_m(1+2\pi)$ 
 gives the certificate (or at least an upper bound for the certificate of $P_2$). See for example the polygon 
 $P_2$ sketched in Figure~\ref{StratEx2-fig}(i) 
 for the minimal visit $x_m=x_{8,8}$.  We do not expect that $S_k$ has 
 already detected these polygons but $S$ finally will. So the ratio of the path length $|S_k|$
 over $x_m(1+2\pi)$   and also the ratio of the path length $|S_k|$ over $x_M(1+\frac{2\pi}{n})$ give 
 lower bounds  for the strategy $S$. Note that the half of the sum of the 
 two ratios  cannot exceed $C$ because otherwise at least one has to be greater than $C$. 
 
 For any such stop we will sort the values of  $S_k'$ in a sequence $X_k$ and we will visit the $n$ rays 
 in a monotone and periodic way by sequence $X_k$ connecting the points by
 line segments; see Figure~\ref{StratEx2-fig}(ii). We can prove that the overall path length of $X_k$ cannot 
 be larger than  $|S_k|$. 
 
 This can be seen as follows. Successively visiting the points of $S_k'$ in a polygonal chain is already a 
 short cut for $S_k$.  This polygonal path for $S_k'$ might move between
  two successive values $x_{i,j_i}$ to $x_{i+1,j_{i+1}}$ where $j_i$ and $j_{i+1}$ are not neighbouring rays. 
  In this case we can further short cut  the length of the chain of $S_k'$  by just counting a movement 
  from $x_{i,j_i}$ to the distance $x_{i+1,j_{i+1}}$ on one of the directly neighbouring rays. For example in 
  Figure~\ref{StratEx2-fig}(i)  the segment from $x_{9,1}$ and $x_{10,4}'$ improves 
  the path length from $x_{9,1}$ and $x_{10,4}$ but passes ray 2 and 3.  We only count the 
distance between $x_{9,1}$ and $x_{10,4}'$ on the neighbouring ray which further improves 
the length.   This means that 
 for a lower bound on the overall path length we can also consider a path that visits 
 two neighbouring rays with  angle $\frac{2\pi}{n}$ successively from one to the other with the 
 corresponding depth values $x_{i,j_i}$ stemming from $S_k'$. By triangle inequality it can be shown 
 that the shortest path that
 visit all the depth of a sequence $S_k'$ on two rays by changing from one 
 ray to the other in any step, visits the two rays successively 
 in an increasing order. A similar argument was applied by one author of this article  
 in~\cite{l-oss-10}  where a detailed proof of this property is given in the Appendix of~\cite{l-oss-10}. 
 Finally, we can rearrange the path of $S_k'$ to a path that visits the rays in a 
 periodic and monotone  way.
 
 Altogether, we have translated the strategy $S_k$ in a discrete strategy $X_k=(x_1,x_2,\ldots,x_k)$ with $k$ entries on
 $n$ rays that visit the rays in a periodic order such that $x_i$ visits ray $i \mbox{ mod }n$ and 
 with overall shorter path length; see Figure~\ref{StratEx2-fig}(ii). Consider the corresponding certificates of this new
 strategy in comparison to the original strategy.
 For the smallest value on the last round and the largest value on the last round 
we will obtain a certificate path $x_{k}(1+\frac{2\pi}{n})$ which is the same for the 
previous maximal value $x_M=x_k$ and a certificate path $x_{k-n+1}(1+2\pi)$ which is never smaller than 
$x_m(1+2\pi)$ for the minimal value  $x_m\leq x_{n-k+1}$.   The minimal value can only increase 
since we sorted the values of $S_k'$. For example in Figure~\ref{StratEx2-fig}(ii) 
we have $k=13$ and the minimal value in 
the last round is given by $x_6=x_{6,6}$ which is larger than $x_m=x_{8,8}$. 
Altogether, the sum of these two ratios in the periodic and monotone setting is always smaller than the 
sum of the ratios in the original setting. 

Finally, we would like to find a periodic and monotone strategy that optimizes the 
sum of exactly such ratios in this discrete version. This optimal strategy will perform at least as good
as any strategy obtained by the above reconstruction. Thus, the optimal value for the sum 
is a lower bound for the sum of two ratios in the original setting. 

For optimizing the sum for an arbitrary strategy we use an infinite sequence of values 
$X=(x_1,x_2,\ldots)$ and  we define the following functionals 
\begin{eqnarray}\label{Funk1-equ}
F^1_k(X) =\frac{\sum_{i=1}^{k-1}  \sqrt{ x_i^2-2\cos\left(\frac{2\pi}n\right) x_i x_{i+1}  +x_{i+1}^2}}{x_{k}(1+\frac{2\pi}{n})}
\end{eqnarray}
and 
\begin{eqnarray}\label{Funk2-equ}
F^2_k(X) =\frac{\sum_{i=1}^{k-1}  \sqrt{ x_i^2-2\cos\left(\frac{2\pi}n\right)x_ix_{i+1}  +x_{i+1}^2}}{x_{k-n+1}(1+2\pi)}
\end{eqnarray}
that represent the ratios. We are looking for a sequence $X$ so that 
$$ \inf_Y \sup_k F^1_k(Y)+F^2_k(Y) = D \mbox{ and }  \sup_k F^1_k(X)+F^2_k(X)=D$$ 
holds which shows that $D$ is the best sum ratio that we can achieve. 

Optimizing such discrete functionals can be done by the method proposed by Gal; 
see also Gal \cite{gc-oefmp-76,g-sg-80}, Alpern and Gal \cite{ag-tsgr-03}, 
and an adaption of Schuierer \cite{s-lbogs-01}. 
It is shown that under certain prerequisites there will be an optimal 
exponential strategy $x_i=a^i$. The main requirement is that 
the functional has to fulfil a unimodality property. This means that the 
piecewise sum of two strategies $X$ and $Y$  is never worse than 
one of the single strategies. This should also hold for a scalar multiplication of a 
single strategy. So any linear combination of strategies that are bounded by a constant 
will remain  bounded by the maximal bound. The proof of Gal shows that in this case 
we can always combine 
bounded strategies so that we 
finally get  arbitrarily close to an exponential strategy that has the same bound; 
see the full proof of Gal in~\cite{g-sg-80} Appendix 2, Theorem 1. 

We can easily show that the requirements for the main Theorem of Gal are fulfilled for 
both functionals $F^1_k(X)$  and $F^2_k(X)$. 
For a similar functional a detailed proof of this property 
was given in the Appendix of~\cite{l-oss-10}. 

Now let us assume that we have an 
optimal strategy $X$ for the sum, say
$F^1_k(X)+F^2_k(X)$. This means that both functionals will also be bounded 
by constants $D_1$ and $D_2$ w.r.t. $X$. 
We make use of linear combination of $X$ but apply them 
independently to the functionals $F^1_k(X)$ and $F^2_k(X)$. The Theorem of Gal 
shows that we will get arbitrarily close to an exponential strategy $x_i=a^i$ that 
is not worse than $X$ for both $F^1_k(X)$ and $F^2_k(X)$. This means 
that $x_i=a^i$ is also not worse than $X$ for the sum functional. 

Altogether, it is allowed to search for the best strategy $x_i=a^i$ and we have to 
optimize 

\begin{eqnarray}
\frac{\sum_{i=1}^{k-1}  \sqrt{ a^{2i}-2\cos\left(\frac{2\pi}n\right)a^{2i+1}  +a^{2i+2}}}{a^k(1+\frac{2\pi}{n})} 
& + & 
\frac{\sum_{i=1}^{k-1}  \sqrt{ a^{2i}-2\cos \left(\frac{2\pi}n\right)a^{2i+1}  +a^{2i+2}}}{a^{k-n+1}(1+2\pi)}\nonumber\\[2ex]
\Leftrightarrow\sum_{i=1}^{k-1} a^i \left(\frac{\sqrt{ 1-2\cos \left(\frac{2\pi}n\right) a  +a^{2}}}{a^k(1+\frac{2\pi}{n})}\right)& + & 
\sum_{i=1}^{k-1}  a^i \left(\frac{\sqrt{ 1-2\cos \left(\frac{2\pi}n\right)a  +a^{2}}}{a^{k-n+1}(1+2\pi)}\right). \label{FirstRatio}
\end{eqnarray}

For Equation~(\ref{FirstRatio}) we resolve the  geometric serie part and simplify the expression to the 
minimization of 
\begin{eqnarray}\label{LastRatio}
g_n(a):=\frac{1}{a-1}\left(\frac{\sqrt{ 1-2\cos \left(\frac{2\pi}n\right) a  +a^{2}}}{(1+\frac{2\pi}{n})}\right) +
\frac{a^{n+1}}{a-1}\left(\frac{\sqrt{ 1-2\cos \left(\frac{2\pi}n\right)a  +a^{2}}}{(1+2\pi)}\right)\;.
\end{eqnarray} 
We minimize Equation~(\ref{FirstRatio}) by numerical means. 
For any number of rays $n$ a minimal value of $g_n(a)$ gives a lower bound on the sum of two 
ratios in the original problem. So we can choose $n$ as large as we want. 
We minimize $g_n(a)$ by numerical means using Maple. 
For example for $n=28000000000$ we obtain $a=1.0000000006809\ldots$ and 
$g(a)=6.62521\ldots$ 
This means that for an arbitrary strategy of the original problem there will always be at least one 
ratio larger than $\frac{6.6252}{2}=3.313126$ which finishes the proof. 
\end{proof}

\section{Conclusion\label{concl-sect}}
We have introduced a new, simple and intuitive performance measure for the 
comparison against an online escape path for an unknown environment. 
The measure outperforms the (few) known ultimate optimal escape paths 
of convex environments and is also sort of a generalization of a discrete list searching approach by Kirkpatrick. 

For a more general class of environments, we presented an online spiral strategy that approximates 
the measure within an (almost) optimal factor of $\approx3.318674$. 
Different to classical results, the spiral optimizes against two extremes. 
It was shown that the factor is almost tight by constructing a lower bound that also holds for 
arbitrary environments.
This is one of the very few cases, where the optimality of spiral search is verified. 

Future work might consider randomization. Additionally, it will be helpful to prove the
strong conjecture that the certificate path is indeed 
always better than the shortest escape path for all environments (even when the 
best the escape path is not known).  

\paragraph*{Acknowledgements:}
We would like to thank all anonymous referees for their helpful comments 
and suggestions.

\appendix
\section{Appendix\label{conf-app}}
\setcounter{theorem}{0}

\subsection{Efficient computation of the certificate} \label{appendix:polytimecomp-sect}

\newcommand{\certPath}{\ensuremath{\Pi_s}\xspace}
\newcommand{\pathArc}{\ensuremath{\alpha_x}\xspace}
\newcommand{\pathSeg}{\ensuremath{\pi_{seg}}\xspace}
\newcommand{\schinzel}[2]{\ensuremath{\lambda_{#1}(#2)}}
\newcommand{\radius}{\ensuremath{x}\xspace}
\newcommand{\sweepCircle}{\ensuremath{C}\xspace}
\newcommand{\region}{\ensuremath{P}\xspace}

In the following, we describe an efficient algorithm to compute \certPath for $s\in \region$.
We assume that \region is bounded by $n$ line-segments.
Each segment is defined by two points $p_i, p_{i+1}$ in the plane.
Thus, the overall number of points is $n+1$.
The algorithm proceeds in two phases.
At first, \textit{candidates} for \certPath are computed.
For a given radius \radius, those candidates have a maximum arc \pathArc in the corresponding direction.
The length of such a candidate can be described as a function of the radius \radius.
In a second phase, an efficient algorithm for computing the upper envelope is applied to these functions to obtain the overall shortest path \certPath.

For reasons of simplicity, we assume that a candidate path intersects with the border of \region two times.
If \pathArc equals zero or $2\pi$, both intersections are identical.
Since we know that a candidate can be described by two segments, we consider its length.
Each segment $s_i$ is part of the line $l_i: \lbrace p_i + t \cdot\overrightarrow{p_i p_{i+1}} | t\in\mathbb{R} \rbrace$.
We denote $d_i$ the shortest distance from $s$ to a segment $s_i$ and $\measuredangle l_i l_j$ the positive angle between both line segments in the direction of $s$.
Moreover, we denote $\left[x^-; x^+ \right]\subset\mathbb{R}$ the existence interval for which the candidate path actually intersects both segments on $l_i$ and $l_j$ at the same time.
This allows us to express \pathArc as a function of \radius
\begin{equation*}
\pathArc = \pi - \measuredangle l_i l_j \pm \left( \arccos\left(\frac{d_i}{r}\right) + \arccos\left(\frac{d_j}{r} \right)\right)
\qquad \text{ for } \radius\in\left[ x^-; x^+ \right].
\end{equation*}
There will not be a candidate path for every pair of segments.
As we show in the following, the number of such paths is linear, indeed.
We also show how all candidate paths can be computed efficiently.

\begin{lemma}\label{lemma:numberOfCandidates}
All candidates for the shortest certificate path can be constructed in \bigO{n \cdot \log n} deterministic time.
The overall number is bounded by \bigO{n}.
\end{lemma}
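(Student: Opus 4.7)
The plan is to parametrize candidates by the pair of boundary edges $(s_i,s_j)$ carrying the two endpoints of the maximum inside arc of $C_s(x)$, and to enumerate these pairs by sweeping $x$ from $0$ outward. Every candidate for $\Pi_s$ then corresponds to a maximal interval $[x^-,x^+]$ on which a fixed pair $(s_i,s_j)$ is the ``extremal'' pair, together with the length function stated immediately before the lemma. Counting and computing candidates therefore reduces to counting and computing the combinatorial changes of this extremal pair as $x$ grows.

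First I would classify the events at which the extremal pair can change. There are three types: (a)~$x$ equals the perpendicular distance $d_i$ from $s$ to the line supporting some edge $s_i$, so a new chord of $C_s(x)$ appears inside $s_i$; (b)~$x$ equals the distance $|s\,p_i|$ to some polygon vertex, so two chord endpoints on adjacent edges merge or split; and (c)~two inside arcs exchange the role of ``longest''. Events of types (a) and (b) contribute \bigO{n} values in total, one per edge and one per vertex. To bound the type-(c) events, I would observe that between consecutive events of types (a) and (b) every inside-arc length is a real-analytic function of $x$, built from terms of the form $\arccos(d_i/x)+\arccos(d_j/x)$ plus a constant angular offset, and any two such functions intersect only a bounded number of times. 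The upper envelope of the \bigO{n} inside-arc length functions is then a Davenport--Schinzel sequence of bounded order, of complexity at most $\schinzel{s}{n}$, which is linear in $n$ for this setting. This bounds the total number of candidates by \bigO{n}.

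For the \bigO{n\log n} running time I would implement the sweep explicitly. The events of types (a) and (b) are sorted in advance in \bigO{n\log n} time. A balanced binary search tree keeps the intersection points of $C_s(x)$ with $\partial \region$ ordered by polar angle around $s$; an auxiliary priority queue is keyed by the lengths of the inside arcs. At a type-(a) event two BST nodes and two priority-queue entries are inserted; at a type-(b) event two chords are spliced, affecting only a constant number of arcs; type-(c) events are predicted and inserted into the event queue by comparing the two length functions carried by the top and the second entries of the priority queue. Each event is processed in \bigO{\log n} time, and whenever the top of the priority queue changes a new candidate is emitted and the previous one is closed with the current value of $x$ as its $x^+$.

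The main obstacle I anticipate is the algebraic step underlying the type-(c) events — verifying that two fixed inside-arc length functions of the above form cross at most a bounded number of times, so that the Davenport--Schinzel constant $s$ is truly small and $\schinzel{s}{n}=\bigO{n}$. Once this is established, the remaining bookkeeping (attaching the correct existence interval $[x^-,x^+]$ to each candidate and not double-counting a pair across the sweep) is a standard event-driven exercise that slots directly into the sweep described above.
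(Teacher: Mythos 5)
There is a genuine gap, and it originates in your definition of a candidate. You declare a candidate to be a maximal interval of \radius on which a fixed pair of edges carries the \emph{globally} longest inside arc, so your count is forced to include the type-(c) events at which two arcs exchange the role of longest. You bound these by the complexity of the upper envelope of \bigO{n} partial length functions, i.e.\ by $\schinzel{s}{n}$ for some constant $s$, and then assert that this ``is linear in $n$ for this setting.'' That assertion is unsupported and, in general, false: for partially defined curves whose pairs cross at most $t$ times the envelope complexity is $\schinzel{t+2}{n}$, and already $\schinzel{3}{n}=\Theta(n\alpha(n))$ is superlinear. Since two functions of the form $x\bigl(1+c+\arccos(d_i/x)+\arccos(d_j/x)\bigr)$ can certainly cross on their common domain, your argument delivers only an almost-linear number of candidates, not the \bigO{n} claimed by the lemma. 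The crossing-number question you flag as your ``main obstacle'' is real, but settling it would still not rescue the linear bound.

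The paper's proof avoids this entirely by working with a weaker notion of candidate: \emph{every} pair of edges whose intersections with the growing circle \sweepCircle are angularly consecutive --- that is, every locally maximal inside arc, one per gap, not only the longest one --- is a candidate on its existence interval. With that definition only your events (a) and (b) occur: each vertex of \region enters the circle exactly once and each edge becomes tangent at most once, giving \bigO{n} events in total, and a short case analysis (vertex with both neighbours inside, with one inside, with both outside, or edge tangency) shows that each event opens at most two new candidates and closes at most one. This charging argument yields \bigO{n} candidates, and sorting the event radii gives the \bigO{n\log n} sweep. The selection of the longest arc per radius --- precisely your type-(c) events --- is deliberately deferred to the second phase as an upper-envelope computation; it is exactly there that the almost-linear factor $\schinzel{t+2}{n}$ enters, which is why the paper's final bound for computing \certPath is $O(\lambda_{t+2}(n)\log n)$ rather than $O(n\log n)$.
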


\begin{proof}
To compute all candidates, we sweep out the region \region with a circle \sweepCircle of expanding radius \radius around $s$.
Initially, we consider the candidate path touching the segment closest to $s$.
This path always exists and has $\pathArc = 2\pi$.
In the course of the sweep algorithm, the radius \radius of \sweepCircle expands.
Until no other point of the bound of \region lies on or in \circle, the initial candidate path is the only one.
However, when the circle reaches a segment, the current candidate path also reaches its corresponding existence interval.
Depending on how \sweepCircle meets the segment, there are three possible type of events; see Figure \ref{figure:event} on the left.
\begin{enumerate}
\item
\sweepCircle meets $p_i$ and both neighbouring segments (including $p_{i-1}$ and $p_{i+1}$) lie inside \sweepCircle.
Then, the current candidate path in this direction degenerates to a straight line.
We can set $x^+$ of the candidate path to the current radius of \sweepCircle.
The total number of candidates does not change for this event.
\item
It may also be the case that \sweepCircle meets $p_i$ so that exactly one of the neighbouring segments lies inside (while the other outside) the circle.
Again, the candidate path reaches $x^+$, as he will no longer intersect with the inner segment.
However, there emerges a (single) new candidate path with one intersection on the segment $\left(p_i, p_{i+1} \right)$ and the second intersection with the same segment as the current candidate.
For this new candidate, we set $x^-$ to the current radius of \sweepCircle.
Obviously, the total number of candidates increases by one.
\item
The third type of event embraces two cases.
Either \sweepCircle meets $p_i$ so that both neighbouring segments lie outside the circle.
Or the circle gets tangential to a segment $\left(p_i, p_{i+1}\right)$.
Both cases involve the creation of two new candidate paths and the limitation of the existence interval of the candidate path in the current direction.
The total number of candidates increases by exactly two.
\end{enumerate}
Each type of event involves that one point (or segment) lies on (gets tangential to) the expanding sweep circle \sweepCircle.
Afterwards, the point lies inside \sweepCircle - a segment will never get tangential to \sweepCircle, again.
As the number of points and segments that define \region is linear, the number of events is \bigO{n} for each type of event and all together.
This has two consequences.

On the one hand, the sweep algorithm terminates after all \bigO{n} events have been handled.
No additional events are created through the course of the algorithm.
This gives the running time, as the events have to be handled ordered by \radius increasing.

On the other hand, each event involves the creation of at most two additional paths; see Figure \ref{figure:event}.
Thus, the overall number of candidate paths is bounded by \bigO{n}.
\begin{figure}
	\begin{minipage}{0.54\textwidth}
		\flushleft
		\includegraphics[width=\textwidth]{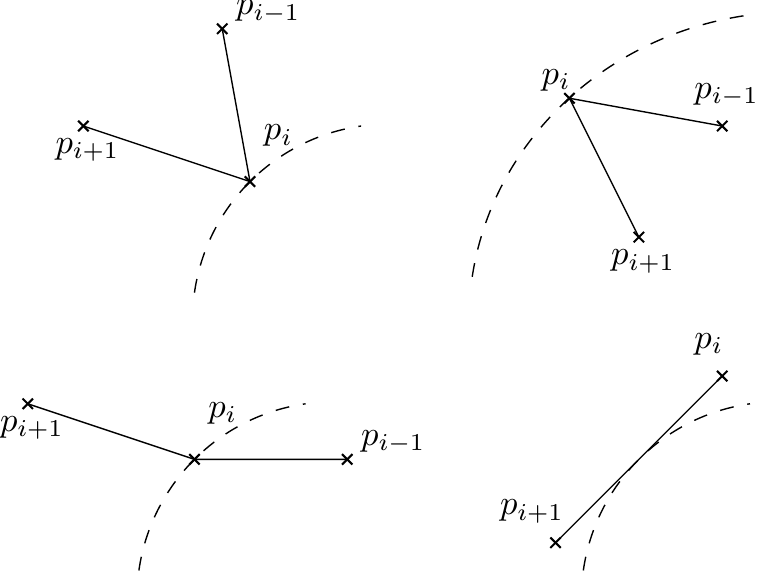}
	\end{minipage}
	\begin{minipage}{0.44\textwidth}
		\flushright
		\includegraphics[width=0.75\textwidth]{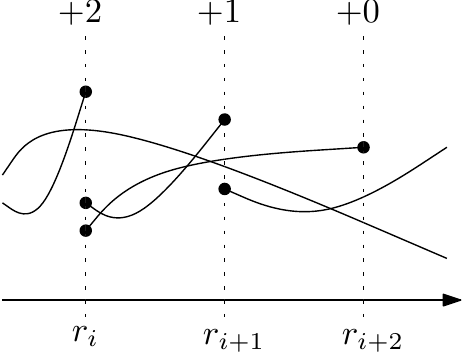}	
	\end{minipage}
	\caption{
	The figure of on the left shows the different type of events, which may occur when the dashed sweep touches a segment of $R$.
	The figure on the right plots the length of several candidate paths as functions of the radius \radius.
	At each event $r_i$, the overall number of candidate paths increases at most by two.}
	\label{figure:event}
\end{figure}
\end{proof}
The length of a candidate path defines a function of a real value \radius on an existence interval $\left[ \radius^-; \radius^+ \right]$.
We extend each of these functions to functions over $\mathbb{R}$ by adding a half-line to each end.
These half-lines can be chosen to run in parallel to all other additional half-lines.
As each length function has a unique maximum or minimum, we can choose the half-lines in a way that they do not intersect more than once with another function.
Then, we compute the upper envelope of no more than $O(n)$ curves.
Two such curves can intersect only $t$ times for some constant $t$. 
Thus, the overall upper envelope of all curves has complexity $O(\lambda_{t+2}(n))$ (almost linear!) and can be computed in $O(\lambda_{t+2}(n)\log n)$; see also Sharir and Agarwal~\cite{sa-dsstg-95}.  

Having computed the upper envelope of all segments, we check each part of the envelope for a minimum.
After \bigO{\schinzel{t+2}{n}} such checks, we obtain 
the certificate path. 

\begin{lemma}
The certificate \certPath of a polygon \region with $n$ edges and a starting point $s\in \region$ can be computed in $O(\lambda_{t+2}(n)\log n)$ time for some constant $t$. 
\end{lemma}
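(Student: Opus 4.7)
The plan is to reduce the problem, via Lemma~\ref{lemma:numberOfCandidates}, to the computation of a lower envelope of \bigO{n} univariate length functions, followed by a linear scan for the global minimum. First I would invoke that lemma to obtain, in \bigO{n\log n} deterministic time, the collection of \bigO{n} candidate certificate paths together with their existence intervals $[\radius_j^-;\radius_j^+]$. For each candidate the total length is $\ell_j(\radius):=\radius(1+\pathArc)$, where $\pathArc=\pi-\measuredangle l_i l_k \pm \arccos(d_i/\radius)\pm \arccos(d_k/\radius)$ is the closed form already derived in the paper.

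Next I would extend every $\ell_j$, originally defined on $[\radius_j^-,\radius_j^+]$, to a function $\tilde\ell_j$ defined on all of $\mathbb{R}$, by attaching to each endpoint a half-line. I would choose all the attached half-lines parallel (and at pairwise distinct vertical positions) so that any two of them meet at most once, and so steep that they cannot create extra crossings with the interior pieces of other candidates. Because on its interior each $\ell_j$ depends on \radius only through \radius itself and through two $\arccos$ terms whose arguments have the form $d/\radius$, the equation $\ell_j(\radius)=\ell_k(\radius)$ can be cleared of its inverse trigonometric terms by isolating $\arccos$ factors and squaring, reducing it to a polynomial equation of fixed degree in \radius (or in $u=1/\radius$). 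Hence any two graphs $\tilde\ell_j,\tilde\ell_k$ cross at most $t$ times for some absolute constant $t$.

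Then I would apply the standard divide-and-conquer lower-envelope algorithm to the $\tilde\ell_j$'s. The envelope of \bigO{n} such curves is a Davenport--Schinzel sequence of order $t+2$, hence of combinatorial complexity \bigO{\schinzel{t+2}{n}}, and it is computable in \bigO{\schinzel{t+2}{n}\log n} time by the results of Sharir and Agarwal~\cite{sa-dsstg-95}. Each arc of the envelope records which single $\ell_j$ realizes the minimum length on that subinterval, so on each arc I can minimize the corresponding closed-form expression in \bigO{1} time. Taking the smallest over the \bigO{\schinzel{t+2}{n}} arcs yields the certificate \certPath within the stated bound; the \bigO{n\log n} term from Lemma~\ref{lemma:numberOfCandidates} is absorbed since $\schinzel{t+2}{n}=\Omega(n)$.

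The main obstacle is pinning down the constant $t$. The difference $\ell_j(\radius)-\ell_k(\radius)$ involves up to four $\arccos$ terms (two from each candidate) whose arguments $d_i/\radius,\dots,d_l/\radius$ are in general distinct. To eliminate them one isolates these terms one at a time, applies $\cos$, and uses the identities $\cos(\alpha\pm\beta)=\cos\alpha\cos\beta\mp\sin\alpha\sin\beta$ and $\sin(\arccos u)=\sqrt{1-u^2}$, squaring out the resulting radicals. After a bounded number of such steps one arrives at a polynomial equation in \radius whose degree depends only on the combinatorial shape of the expression, not on $n$; the number of real roots of that polynomial then gives the desired $t$. Once $t$ is fixed the remainder of the argument is routine Davenport--Schinzel bookkeeping.
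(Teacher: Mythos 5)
Your pipeline is the paper's: invoke Lemma~\ref{lemma:numberOfCandidates}, extend the \bigO{n} length functions by parallel half-lines, bound the number of pairwise crossings by a constant $t$, compute an envelope of complexity \bigO{\schinzel{t+2}{n}} in \bigO{\schinzel{t+2}{n}\log n} time via Sharir--Agarwal, and scan its arcs for the minimum. Your sketch of how to pin down $t$ (isolating the $\arccos$ terms, applying $\cos$, and squaring out the radicals to reach a fixed-degree polynomial in $1/\radius$) is in fact more detailed than the paper, which merely asserts the constant bound. However, there is one genuine error: you compute the \emph{lower} envelope where the \emph{upper} envelope is required, and this is not a naming slip, since you explicitly take the pointwise minimum over candidates and then minimize over \radius, which yields $\min_{\radius}\min_j \ell_j(\radius)$. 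The certificate is $\min_{\radius}\radius(1+\alpha_s(\radius))$ where $\alpha_s(\radius)$ is the \emph{maximal} free arc of the circle of radius \radius around $s$, because the path must reach the boundary for every rotation of the environment. Each candidate records only one locally maximal gap, so a candidate whose arc is not the globally largest gap at that radius is not a valid escape path there. Hence $\Pi_s(\radius)$ is the pointwise \emph{maximum} of the candidate length functions alive at \radius, and the certificate is the minimum of that upper envelope, i.e.\ $\min_{\radius}\max_j \ell_j(\radius)$. Your quantity can be strictly smaller (consider a radius at which the circle has one tiny free arc and one huge one), and the path it produces need not leave \region at all.

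The repair is local and does not change the complexity: choose the attached half-lines so that outside its existence interval each extended function lies below all others (so the extensions never appear on the upper envelope), compute the \emph{upper} envelope of the \bigO{n} curves, and only then minimize over its \bigO{\schinzel{t+2}{n}} arcs. Everything else in your argument, including the absorption of the \bigO{n\log n} sweep cost, goes through unchanged.
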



\end{document}